\definecolor{winered}{rgb}{0.6,0.1,0.1}
\renewcommand*{\le}{\leqslant}
\renewcommand*{\leq}{\leqslant}
\renewcommand*{\ge}{\geqslant}
\renewcommand*{\geq}{\geqslant}
\renewcommand{\epsilon}{\varepsilon}
\newcommand{\myemph}[1]{{\color{winered}\emph{#1}}}
\newcommand{\naturals}{{{\mathbb{N}}}}
\newcommand{\reals}{{{\mathbb{R}}}}
\newcommand{\rationals}{{{\mathbb{Q}}}}
\newcommand{\cost}{{{\mathrm{cost}}}}
\newcommand{\calR}{{{\mathcal{R}}}}
\theoremstyle{definition}
\newtheorem{definition}{Definition}
\newtheorem*{definition*}{Definition}
\newtheorem{example}{Example}
\newtheorem{remark}{Remark}
\theoremstyle{plain}
\newtheorem{theorem}{Theorem}
\newtheorem*{theorem*}{Theorem}
\newtheorem{lemma}{Lemma}
\newtheorem{proposition}{Proposition}
\newtheorem*{rep@theorem}{\rep@title}
\newcommand{\newreptheorem}[2]{\newenvironment{rep#1}[1]{\def\rep@title{#2 \ref{##1}}\begin{rep@theorem}}{\end{rep@theorem}}}
\renewenvironment{description}
{\list{}{\labelwidth=10pt \leftmargin=15pt
		}}
{\endlist}
\newcommand{\pos}{{{{\mathrm{pos}}}}}
\newcommand{\argmin}{{\mathrm{argmin}}}
\newtcolorbox{examplebox}{
	blank,
breakable,
	left=3.5mm, right=7pt,
	top=6pt, bottom=6pt,
	parbox=false,
	before skip=8pt,after skip=10pt,
	colbacklower=blue!6!white,
	borderline west={1mm}{0mm}{blue!30!white}}
\title{{\huge\sffamily Proportional Participatory Budgeting \\ with Additive Utilities}}
\author[1]{Dominik Peters}
\author[2]{Grzegorz Pierczy\'{n}ski}
\author[2]{Piotr Skowron}
\affil[1]{CNRS, LAMSADE, Universit\'e Paris Dauphine--PSL, \texttt{dominik.peters@lamsade.dauphine.fr}}
\affil[2]{University of Warsaw, $\{$\texttt{g.pierczynski,p.skowron}$\}$\texttt{@mimuw.edu.pl}}
\date{Version v2 -- October 2022\thanks{In arXiv version v2, we have updated the paper to follow the version at NeurIPS 2021 \citep{peters2021neurips}, though this version has significant additional improvements. Changes: The paper title now refers to ``additive'' rather than to ``cardinal'' valuations. We now call our voting rule \emph{Equal Shares} instead of the preliminary name ``Rule X''. We have added pseudocode and examples. We now use a slightly stronger definition of EJR up to one project, see \Cref{fn:ejr-counterexamples}. We have clarified the description of Equal Shares for ordinal valuations. We have added a remark that GCR satisfies FJR even for general monotone valuations. In addition, the presentation has been clarified throughout.}}
\begin{document}
	
	\maketitle

\begin{abstract}
We study voting rules for participatory budgeting, where a group of voters collectively decides which projects should be funded using a common budget. We allow the projects to have arbitrary costs, and the voters to have arbitrary additive valuations over the projects. We formulate an axiom (Extended Justified Representation, EJR) that guarantees proportional representation to groups of voters with common interests. We propose a simple and attractive voting rule called the Method of Equal Shares that satisfies this axiom for arbitrary costs and approval utilities, and that satisfies the axiom up to one project for arbitrary additive valuations. This method can be computed in polynomial time. In contrast, we show that the standard method for achieving proportionality in committee elections, Proportional Approval Voting (PAV), cannot be extended to work with arbitrary costs. Finally, we introduce a strengthened axiom (Full Justified Representation, FJR) and show that it is also satisfiable, though by a computationally more expensive and less natural voting rule. 
\end{abstract}

\section{Introduction}
A growing list of cities now uses Participatory Budgeting (PB) to decide how to spend their budgets \citep{de2021international,wampler2021participatory}. Through a voting system, PB allows the residents of a city to decide which projects will be funded by the city authorities. This increases civic involvement in government, by increasing the number of issues that are decided by democratic vote, and by allowing residents to submit their own project proposals \citep{participatoryBudgeting,aziz2020participatory}. 

To count the votes, most cities use a variant of a simple protocol: Each voter is allowed to vote for a certain number of project proposals. Then, the projects with the highest number of votes are funded, until the budget limit is reached. While simple and intuitive, this is a bad voting rule, because it gives too much voting power to pluralities: even quite small cohesive groups (if they are larger than other groups) can dictate the entire outcome, an effect that one might call \myemph{the tyranny of the largest minority}. 

To see this, consider Circleville, a fictional city divided into four districts. A map of the city is shown in \Cref{fig:circleville}. The districts all have similar sizes, but Northside has the largest population. Suppose \$400k have been allocated to PB (\$1 per person), and suppose that all the project proposals are of a local character (such as school renovations), and that residents only vote for projects that concern their own district. For example, every Northside resident will cast votes for projects $A$, $B$, $C$, and $D$, but no one else votes for these. Because Northside is the most populous district, the Northside projects will all receive the highest number of votes, and the voting rule described will spend the entire budget on Northside projects (by implementing projects $C$ and $D$). The 280k residents of the other three districts are left empty-handed.

\begin{figure}
	\centering
	\begin{tikzpicture}[transform shape, scale=0.9]
\draw [thick] (0,0) circle (5);
	\foreach \x in {45,135,225,-45}
		\draw [thick] (\x:0) -- (\x:5);
	
\foreach \x/\y/\name/\pop in {
		0/4.5/Northside/120k,
		0/-4.2/Southside/80k,
		3.95/1/Eastside/110k,
		-3.9/1/Westside/90k
	}
	{
		\node [fill=black!10!white, rounded corners=0.15cm, inner sep=0.15cm] at (\x,\y) () {\name};
		\node [font=\small] at (\x,\y-0.5) () {pop. \pop};
	}
	
	\node [font=\Huge, fill=white, inner sep=3pt] at (0,1.4) () {Circleville};
	
\draw [blue!30!white,line width=0.5cm] plot [smooth] coordinates {(200:6) (190:4.6) (205:3) (195:1) (205:0) (-30:1.6) (-10:2.8) (-10:4) (-18:6)};
	
	\node [font=\scriptsize,color=blue,rotate=-34] at (4.8,-1.25) (river) {Example River};
	
\fill [green!50!gray!80!white, rounded corners=0.2cm] (-0.3,2) -- (0,3) -- (1.5,3) -- (1.2,2);
	\fill [green!50!gray!80!white] (-4.4,-2) -- (-4.0,-2) -- (-3.8,-2.4) -- (-4.3,-2.4);
	
\foreach \x/\y/\name/\price in {
		-2.55/2.8/A/50k,
		-1.95/3.8/B/30k,
		0.2/2.7/C/150k,
		2.2/2.8/D/250k,
		3.1/2.1/E/60k,
		2.3/1.4/F/10k,
		2.18/-0.2/G/90k,
		0.95/-0.18/H/60k,
		3.0/-1.8/I/4k,
		1.0/-2.5/J/70k,
		2.0/-3.6/K/20k,
		1.3/-4/L/30k,
		0.0/-3.2/M/20k,
		-1.7/-3.5/N/40k,
		-1/-1.9/O/100k,
		-3.3/-3.0/P/30k,
		-4.16/-2.2/Q/80k,
		-3.7/-0.6/R/10k,
		-1.8/-0.3/S/40k,
		-3.7/2.0/T/7k
	}
	{
		\draw (\x,\y) -- (\x,\y+0.5);
		\draw [fill=white] (\x,\y+0.5) circle (0.1);
		\draw [fill=black] (\x,\y) circle (0.01);
		\node [anchor=west] at(\x+0.1,\y+0.65) () {$\name$};
		\node [fill=blue!10!white, rounded corners=0.15cm, inner sep=0.07cm, font=\scriptsize, anchor=west] at(\x+0.2,\y+0.25) () {\$\price};
	}
	
	\end{tikzpicture}
	\caption{Map of Circleville, showing the locations and costs of the PB project proposals.}
	\label{fig:circleville}
\end{figure}

To circumvent this obvious issue, many cities have opted to hold separate elections for each district. The budget is divided in advance between the districts (e.g., in proportion to their number of residents), each project is assigned to a district, and voters only vote in their local election. While this avoids the issue of spending the entire budget in Northside, this fix introduces many other problems. For example, projects on the boundary of two districts (such as $A$ and $P$) need to be assigned to one of them. Residents of the other district may be in favor of the boundary project, but cannot vote for it. Thus boundary projects are less likely to be funded, even if they would be more valuable overall. Similarly, projects without a specific location that benefit the entire city are difficult to handle. Also, interest groups that are not geographic in nature will be underserved; for instance, parents across the city might favor construction of a large playground (project $C$), but with separate district elections, parents cannot form a voting block.
Similarly, bike riders across the city cannot express their joint interest in the construction of a bike trail along Example River (projects $R$, $S$, $H$, and $G$).

To solve these problems, it seems desirable to hold a single city-wide election, but to use a voting system that ensures that money is spent proportionally. The voting system should automatically and endogenously identify groups of voters who share common interests, and make sure that those groups are appropriately represented. This aim has been identified by several researchers \citep{aziz2018proportionally,fain2018fair}, but in our view no convincing proposal for a proportional voting rule has emerged before this work. Good formalizations of “proportionality” for the PB context were also missing; the concept of the \emph{core} is a candidate but it is a very demanding requirement, and there are situations where it cannot be satisfied \citep{fain2018fair}.

In this paper, we formalize proportionality for participatory budgeting as an axiom called \emph{extended justified representation} (EJR). The axiom requires that no group of voters with common interests is underserved. We construct a simple and attractive voting rule (the Method of \emph{Equal Shares}) that satisfies EJR for approval preferences, and that satisfies EJR up to one project for general additive valuations. We then discuss a strengthening of EJR---which we call \emph{fully justified representation} (FJR)---and show that this strengthening is still satisfiable, albeit by a different voting rule. 

\subsubsection*{Our approach: Generalize concepts from multi-winner voting}
Both our proportionality axiom and our voting rule are generalizations of concepts that have been introduced in the literature on multi-winner voting \citep{FSST-trends}. That literature can be seen as handling a special case of PB, where all projects cost the same amount of money. This is often called the \emph{unit cost assumption}. Under this assumption, the problem is equivalent to selecting a committee of a specified size $k$. 

Much of the relevant literature studies rules that work with \emph{approval ballots} \citep{lackner2022book}, where voters are allowed to approve or disapprove each project. Approvals are interpreted as follows: a voter's utility of an outcome set $W$ of projects is the number of projects in $W$ that the voter has approved.
In the following, we will relax these two assumptions (approvals, unit costs) one after the other: first we consider generalizing approval-based rules beyond the unit cost case, and then we will move to \emph{additive valuations} which allow voters to specify utilities beyond just 0 and 1.

\subsubsection*{Proportional Approval Voting does not remain proportional beyond unit costs}

The study of approval-based multi-winner voting rules has been very productive \citep{justifiedRepresentation,aaai/BrillFJL17-phragmen,pjr17,lac-sko:t:approval-thiele,pet-sko:laminar}. Researchers have identified a considerable number of proportionality axioms and of attractive voting rules for this case. 

The most prominent proposal for a proportional multi-winner rule is known as Proportional Approval Voting (PAV), also known as Thiele's method after its Danish first inventor \citet{Thie95a}. Thiele's rule is based on optimization. Suppose that $N$ is the set of voters, and that each voter $i\in N$ has indicated a set $A(i)$ of projects that $i$ approves. Then for each set $W$ of projects which is feasible (i.e., its total cost is at most the budget limit), the rule computes the score
\[ \text{PAV-score}(W) = \sum_{i\in N} \left(1 + \frac12 + \frac13 + \cdots + \frac{1}{|W \cap A(i)|} \right) \text{.} \]
The output of PAV is a feasible set $W$ that maximizes this score. When the unit cost assumption holds, PAV is a great rule and lives up to its name: it is known to be proportional both in an axiomatic sense \citep{justifiedRepresentation} and in a quantitative sense \citep{skowron:prop-degree}. In fact, it is known that among ``optimization-based rules'' (suitably defined), only PAV is proportional \citep{lac-sko:t:approval-thiele,justifiedRepresentation}.

\begin{figure}
	\centering
	\begin{tikzpicture}
	\draw [thick, rounded corners=0.5cm] (0,0) rectangle (6.66,3);
	\draw [thick] (4.44,0) -- (4.44,3);
	\node [font=\LARGE, fill=white, inner sep=5pt] at (3.33,3) () {Onetown};
	
	\foreach \x/\y/\name/\pop in {
		2.3/2.2/Leftside/60k,
		5.55/2.2/Rightside/30k
	}
	{
		\node [fill=black!10!white, rounded corners=0.15cm, inner sep=0.15cm] at (\x,\y) () {\name};
		\node [font=\small] at (\x,\y-0.5) () {pop. \pop};
	}
	
	\foreach \x/\y/\name/\price in {
		0.4/0.5/L_1/20k,
		1.8/0.5/L_2/20k,
		3.2/0.5/L_3/20k,
		5.1/0.5/R/45k
	}
	{
		\draw (\x,\y) -- (\x,\y+0.5);
		\draw [fill=white] (\x,\y+0.5) circle (0.1);
		\draw [fill=black] (\x,\y) circle (0.01);
		\node [anchor=west] at(\x+0.1,\y+0.65) () {$\name$};
		\node [fill=blue!10!white, rounded corners=0.15cm, inner sep=0.07cm, font=\scriptsize, anchor=west] at(\x+0.2,\y+0.25) () {\$\price};
	}
	\end{tikzpicture}
	\qquad
	\begin{tikzpicture}
	\draw [thick, rounded corners=0.5cm] (0,0) rectangle (6.66,3);
	\draw [thick] (4.44,0) -- (4.44,3);
	\node [font=\LARGE, fill=white, inner sep=5pt] at (3.33,3) () {Twotown};
	
	\foreach \x/\y/\name/\pop in {
		2.3/2.2/Leftside/60k,
		5.55/2.2/Rightside/30k
	}
	{
		\node [fill=black!10!white, rounded corners=0.15cm, inner sep=0.15cm] at (\x,\y) () {\name};
		\node [font=\small] at (\x,\y-0.5) () {pop. \pop};
	}
	
	\foreach \x/\y/\name/\price in {
		0.4/0.5/L_1/30k,
		1.8/0.5/L_2/30k,
		3.2/0.5/L_3/30k,
		5.1/0.5/R/30k
	}
	{
		\draw (\x,\y) -- (\x,\y+0.5);
		\draw [fill=white] (\x,\y+0.5) circle (0.1);
		\draw [fill=black] (\x,\y) circle (0.01);
		\node [anchor=west] at(\x+0.1,\y+0.65) () {$\name$};
		\node [fill=blue!10!white, rounded corners=0.15cm, inner sep=0.07cm, font=\scriptsize, anchor=west] at(\x+0.2,\y+0.25) () {\$\price};
	}
	\end{tikzpicture}
	\caption{Onetown and Twotown are identical, except that the projects have different costs. Both have a budget of \$90k available for PB.}
	\label{fig:equivalent-instances}
\end{figure}

However, when the unit cost assumption does not hold, PAV ceases to guarantee proportional representation. To see this, consider the city of Onetown shown in \Cref{fig:equivalent-instances}. Onetown has 90,000 residents split in two districts, and has \$90,000 available for participatory budgeting. The 60,000 residents in Leftside all vote for projects $\{L_1, L_2, L_3\}$ and the 30,000 residents in Rightside vote for the single project $\{R\}$. Note that the costs are such that we can either afford to implement all three $L$-projects, giving PAV score 110,000, or implement two $L$-projects and the $R$-project, giving PAV score 120,000. Thus, PAV implements two $L$-projects and the $R$-project. However, note that Leftside residents form two thirds of the population of Onetown, and so by proportionality are entitled to two thirds of the budget (\$60,000), which is enough to implement all three $L$-projects. Hence, Leftside is underrepresented by PAV.

To see what is going on, consider Twotown from \Cref{fig:equivalent-instances}. Twotown is just like Onetown, except that now all projects cost \$30,000. Note that for Twotown it is still the case that we can either afford all three $L$-projects, or two $L$-projects plus $R$. By the same calculation as before, PAV implements the latter possibility. This time, this is the proportional choice: Leftside now deserves only two projects, since only two projects are affordable with Leftside's share of the budget.

Onetown and Twotown are nearly identical: same number of residents, same district structure, same alternatives, same approval sets, and even the feasibility constraint (three $L$s or two $L$s plus $R$) is the same. Since the definition of PAV only depends on these characteristics, it must select the same outcome for both towns. But the costs differ, and therefore different outcomes are proportional, and hence PAV fails proportionality. The same is true for all other rules that depend only on preferences and feasibility constraints but not costs. This suggests that there is no variant of  the PAV idea of optimzing some social welfare function that retains PAV's proportionality guarantees beyond the unit cost case.

\begin{theorem}
	\label{thm:welfarist-impossibility}
	Every voting rule that only depends on voters' utility functions and the collection of budget-feasible sets must fail proportionality, even on instances with a district structure.
\end{theorem}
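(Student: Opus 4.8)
The plan is to exhibit two participatory-budgeting instances $I_1$ and $I_2$ that are \emph{indistinguishable} to any rule of the stated form --- same voters, same utility function for each voter, and the same family of budget-feasible sets of projects --- but for which no single outcome is proportional for both. Given such a pair the theorem is immediate: a rule depending only on utilities and on the family of feasible sets must return the same outcome $W$ on $I_1$ and on $I_2$, and $W$ then fails proportionality on at least one of them. Both instances will carry a district structure, so the impossibility persists under that restriction.

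For $(I_1, I_2)$ I would take Onetown and Twotown from \Cref{fig:equivalent-instances}: voters $N = N_L \cup N_R$ with $|N_L| = 60{,}000$ and $|N_R| = 30{,}000$, projects $L_1, L_2, L_3, R$, budget $B = 90{,}000$; every voter in $N_L$ has additive utility $1$ for each of $L_1, L_2, L_3$ and $0$ for $R$, every voter in $N_R$ has utility $1$ for $R$ and $0$ otherwise; the \emph{only} difference between the instances is the costs ($20$k, $20$k, $20$k, $45$k in Onetown versus $30$k each in Twotown). The utility profiles are then identical, so the only thing to verify --- and the single point where the construction must be chosen carefully --- is that the two families of feasible sets coincide. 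In Onetown a set is feasible iff its cost is at most $90$k, which, since $20+20+20 = 60 \le 90$ but $20+20+20+45 = 105 > 90$, means every subset of $\{L_1,L_2,L_3,R\}$ except the whole set; in Twotown a set is feasible iff it has at most three projects, which again is every proper subset. So $I_1$ and $I_2$ present exactly the same data to the rule.

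It remains to check that their proportional outcomes are disjoint, for which I would invoke only a minimal district-fairness property --- far weaker than EJR or the core but implied by both: in an instance whose voters partition into districts valuing pairwise-disjoint sets of projects, if the favorite set $P_j$ of district $j$ has cost at most $\frac{|N_j|}{|N|}\,B$, then a proportional outcome funds all of $P_j$. In Onetown, $N_L$ is a $\frac{2}{3}$-fraction of the voters and $\{L_1,L_2,L_3\}$ costs $60\text{k} = \frac{2}{3}B$, so every proportional outcome contains $\{L_1,L_2,L_3\}$; since that set exhausts the budget, it is the unique proportional outcome for Onetown (and it is proportional: Rightside's favorite $\{R\}$ costs $45\text{k} > 30\text{k} = \frac{1}{3}B$, so it raises no affordable claim). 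In Twotown, $N_R$ is a $\frac{1}{3}$-fraction and $\{R\}$ costs $30\text{k} = \frac{1}{3}B$, so every proportional outcome for Twotown contains $R$ and hence differs from $\{L_1,L_2,L_3\}$ (e.g.\ $\{L_1,L_2,R\}$ works). Thus the proportional outcomes of $I_1$ and $I_2$ are disjoint, completing the proof. There is no genuinely hard step; the only care required is engineering the costs so that the feasibility families collapse onto one another while the fair allocations are forced apart, and being explicit that the fairness property used at the end is weak enough to be uncontroversial yet strong enough to pin down conflicting outcomes.
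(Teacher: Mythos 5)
Your proposal is correct and follows essentially the same route as the paper, which proves \Cref{thm:welfarist-impossibility} via exactly the Onetown/Twotown pair of \Cref{fig:equivalent-instances}: identical utilities and identical families of budget-feasible sets (all proper subsets of $\{L_1,L_2,L_3,R\}$), but district-level proportionality forces $\{L_1,L_2,L_3\}$ in Onetown and an outcome containing $R$ in Twotown. Your added verification of the feasibility families and the explicit weak district-fairness criterion match the paper's intent, so there is nothing to correct.
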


In fact, as we will show later (\Cref{ex:pav_no_ejr}), PAV is not even approximately proportional.

\subsubsection*{Equal Shares: A simple method that guarantees proportionality}

Besides Thiele's method, another committee method that satisfies strong proportionality properties is the recently proposed \emph{Method of Equal Shares} \citep{pet-sko:laminar}, originally known as ``Rule X''.  It turns out that this rule can be naturally extended beyond unit costs, and that (unlike PAV) the rule retains its proportionality properties in the general case.

On a high level, Equal Shares works as follows. If there are $n$ voters, it starts out by dividing the available budget into $n$ parts, and giving each voter their share into a virtual bank account. Equal Shares then repeatedly looks for a project whose approvers have enough money left to fund it, chooses one such project and charges its approvers for it. The rule does so until no further projects are affordable. Just from this description it is clear the rule will be proportional on instances with a district structure, because if the member of a district only approve projects in their district, then the rule will spend (at least) a fraction of the budget proportional to the district size on those projects. But to ensure good behavior on other instance, it is crucial exactly how the rule chooses among different affordable projects and how the rule divides the chosen project's cost among its supporters. Equal Shares always spreads the cost of the project as \emph{equally as possible}, which usually means that all supporters contribute the same amount of money to it; if some supporters do not have this amount of money left, they spend their entire remaining budget. If several projects are affordable, Equal Shares chooses the project that minimzes the highest amount that any supporter needs to pay. (Thus, all else equal, Equal Shares favors cheap projects over expensive ones, and favors projects with many supporters over projects with fewer.)

As we mentioned, it is clear that Equal Shares is proportional on district-based instances. On its own, this is a rather weak guarantee. In the real world, like in Circleville (\Cref{fig:circleville}), voters will sometimes vote for projects in other districts, and it is uncommon that voters will approve \emph{all} the projects in their own district. A truly proportional voting rule should be able to represent all kinds of interest groups, even in cases where the same voter is part of several such groups.

Consider an arbitrary subset of voters, $S \subseteq N$. For example, $S$ could be the residents of a district, or the set of parents in Circleville, or the set of bike users. The group $S$ forms a fraction $|S|/|N|$ of the population, and thus intuitively its members deserve to control a fraction $|S|/|N|$ of the budget. This idea is the basis of most proportionality axioms developed in the literature; they differ by how they formalize the notion of ``deserving'' part of the budget. We will consider an axiom that guarantees to represent groups whenver they are sufficiently \emph{cohesive}, in the sense of having similar preferences. Suppose that $S$ can come up with a set $T$ of projects such that $T$ can be funded with a $|S|/|N|$ fraction of the budget. Suppose further that each voter in $S$ approves all the projects in $T$; this means the group is cohesive. Then an axiom called Extended Justified Representation (EJR) demands that the voting rule must select a set $W$ such that at least one voter in $S$ approves at least $|T|$ of the funded projects in $W$. In other words, EJR prohibits sets $W$ where all the voters in $S$ are underrepresented in the sense that they would all prefer the set $T$ to $W$.

EJR was first proposed in the context of committee voting by \citet{justifiedRepresentation}. EJR is a demanding property, and initially PAV was the only known natural voting rule satisfying it, but we have seen that without unit costs, PAV fails EJR even in well-structured cases. 
However, \citet{pet-sko:laminar} showed that Equal Shares satisfies EJR for approval-based committee elections. In this paper, we will prove that this continues to hold without the unit cost assumption.

%\begin{theorem}
%	For approval-based PB, Equal Shares satisfies Extended Justified Representation.
%\end{theorem}
%\addtocounter{theorem}{-1}
\medskip
\noindent
\textbf{Theorem 2.} \emph{For approval-based PB, Equal Shares satisfies Extended Justified Representation.}
\medskip

The intuition behind this result is that, under Equal Shares, a group $S$ is explicitly given their share of the budget. As the rule progresses, the money of $S$ is spent and by design of Equal Shares it is spent on projects that provide good value for money. Thus, the only way that $S$ could end up underrepresented is if Equal Shares does not spend enough of $S$'s money; but we can show that this never happens if $S$ is cohesive.

\subsubsection*{Beyond approval: Allowing more expressive additive preferences}

In real-world PB elections, different projects differ vastly in their scopes and costs. For example, in the 2019 PB election in the 16th arrondissement of Paris, the most expensive project that was funded cost \EUR{560k} (refurbishing a sports facility) and the cheapest cost \EUR{3k} (providing materials for a school project of building a board game).
The former project received 775 votes, and the latter 670 votes. Hence, the former project was 1.15 times as popular as the latter, but it cost 186 times as much! If we interpret approvals as 0/1 utilities, we thereby assume that a voter is indifferent between them. On that view, the cheap project provides an amazing value. It is more likely, though, that the approval-based interface did not allow voters to adequately express their values.

Facing these large cost differences, a better preference model might be given by general \emph{additive valuations}, which allow voters to specify an arbitrary utility value for each project, with the assumption that a voter's satisfaction is the sum of the utilities of the funded projects. Some cities already allow residents to vote this way; for example the Polish city of Cz\c{e}stochowa asks voters to divide 10 points between projects (for example they could assign 1 point each to 10 different projects, or give 6 points to one project and 4 points to another), while Gdansk and Strasbourg ask voters to divide 5 points between projects. In the literature, additive valuations are considered by \citet{benade2017preference} who study preference elicitation issues, and by \citet{fain2018fair} and \citet{fluschnik2019fair} who consider an aggregation rule similar to PAV, based on optimizing a Nash product objective. The latter rule will not satisfy us, given PAV's poor performance without unit costs. Further, even for unit costs, the rule does not satisfy our version of the EJR axiom.

We propose a way to adapt Equal Shares to general additive valuations. In our proposal, when Equal Shares decides to fund a project, a voter's payment is proportional to the voter's utility for the project. So if voter $i$ assigns utility 2 to project $B$ while $j$ assigns utility 1 to $B$, then Equal Shares will ask $i$ to pay twice as much as $j$ if $B$ is funded. We also propose a natural way to extend the EJR axiom to general additive valuations. Equal Shares almost satisfies the axiom: it satisfies Extended Justified Representation up to one project, which means that for every group, it would be appropriately represented if we are allowed to add one more project to the outcome. We do show that there always exists an EJR outcome (without the ``up to one'' qualification), but also prove that unless P = NP, no strongly polynomial-time method such as Equal Shares can satisfy EJR.

While eliciting additive valuations seems attractive to obtain fine-grained utility information, many cities may choose to stick to approval voting for its simplicity. It is still valuable to have generalized Equal Shares to additive valuations, because it allows us to interpret approval ballots in a different way. So far, we have assigned utility 1 to approved projects $c$, but we could also assign them utility $\cost(c)$, thus obtaining \emph{cost utilities}. The latter way assumes that a voter approving two projects with different costs will prefer funding the more expensive one. In the Paris example above, this seems closer to the truth. In other words, for approval utilities we implicitly assume that the voter's utility is defined as the \myemph{number of projects} that the voters supports and that have been funded; for cost utilities, the voter's utility is defined as the total \myemph{amount of funds spent on projects} supported by the voter.

Finally, additive valuations allow us to consider another input format, where voters \emph{rank} the projects in order of preference (\emph{ordinal preferences}). We show that if we convert rankings into additive valuations using a lexicographic scheme, then Equal Shares turns into a voting method that satisfies a property known as Proportionality for Solid Coalitions (PSC), an axiom identified in the literature on the Single Transferable Vote (STV) \citep{woo:j:properties,tid-ric:j:stv,az-bar:expanding_approval} and that is desirable for electing committees given ordinal preferences.

\subsubsection*{FJR: A proportionality axiom even stronger than EJR}

In the previous literature on approval-based multiwinner elections, loosely speaking, EJR was the strongest proportionality axiom known to always be satisfiable.\footnote{Though the literature contains other proportionality notions that are both logically and conceptually incomparable, such as ``perfect representation'' \citep{pjr17} satisfied by the Monroe rule, and the concepts of laminar proportionality and priceability \citep{pet-sko:laminar} satisfied by Phragm\'en's rule and Equal Shares.}
Many other rules such as Phragm\'en's rule or Chamberlin--Courant only satisfy substantially weaker axioms (known as PJR \citep{pjr17} and JR \citep{justifiedRepresentation}).

A very attractive strengthening of EJR is the \emph{core} \citep{justifiedRepresentation,fain2018fair}. We say that a set $S \subseteq N$ of voters \emph{blocks} an outcome $W$ if there is a set $T$ of  projects affordable with a $|S|/|N|$ fraction of the budget such that each member of $S$ strictly prefers $T$ to $W$ (in the sense that each member of $S$ approves strictly more projects in $T$ than in $W$). In such a case, the group $S$ appears to be underrepresented. An outcome $W$ is in the core if it is not blocked by any coalition $S$. (EJR is a special case of the core where the set $T$ needs to be unanimously liked by the members of $S$, so that $S$ is a ``cohesive'' group.) For the approval-based case, it is unknown whether there always exists an outcome in the core (even under the unit cost assumption), and this is surely the most tantalizing open problem in this area. For general additive utilities, we already know that the core might be empty.

We propose a property that is weaker than the core but stronger than EJR.
We call this axiom Fully Justified Representation (FJR).\footnote{Apologies that this name is not particularly descriptive, but then neither is EJR or PJR.}
It strengthens EJR by weakening the cohesiveness requirement.
FJR requires that if a group $S\subseteq N$ of voters can propose a set $T$ of projects that is affordable with $S$'s share of the budget, and each voter has utility at least $\ell$ for the set $T$, then at least one voter in $S$ has utility at least $\ell$ in the chosen outcome $W$. In the approval case, we see that FJR does not insist that $T$ is unanimously approved by the group $S$ (like in EJR), but it just requires that $T$ is very popular among $S$.

Both PAV and Equal Shares fail FJR (\Cref{ex:rulex_no_strong_ejr,ex:pav_no_strong_ejr}). However, we prove that there does indeed exist a rule satisfying FJR, which works for arbitrary costs. The rule is a simple greedy procedure that repeatedly looks for groups with maximum cohesiveness and then satisfies them. While this is not a polynomial-time algorithm and not a particularly natural rule, we can show that this proposal can be made compatible with some other properties (in particular priceability, a property introduced by \citet{pet-sko:laminar}). For future work, it will be interesting to look for new natural rules satisfying FJR; this is even interesting for the committee context. An attractive feature of FJR and of our greedy rule for satisfying it is that they work for general monotone valuations, and so we do not even need to assume additivity.

\section{Preliminaries}\label{sec:preliminaries}

For each $t \in \naturals$, write $[t] = \{1,2,\ldots,t\}$.
An \myemph{election} is a tuple $(N, C, b, \cost, \{u_i\}_{i \in N})$, where:
\begin{itemize}
	\item $N=[n]$ and $C = \{c_1, \ldots, c_m\}$ are the sets of \myemph{voters} and  \myemph{candidates} (or \myemph{projects}).
	\item $b \in \rationals_{\ge 0}$ is the available \myemph{budget}.
	\item $\cost\colon C \to \rationals_{\ge 0}$ is a function that for each $c \in C$ assigns the \myemph{cost} that needs to be paid if $c$ is selected. For each $T \subseteq C$, we write $\cost(T) = \sum_{c \in T} \cost(c)$ for the total cost of $T$.
	\item For each voter $i \in N$, the function $u_i\colon C \to \rationals_{\ge 0}$ defines $i$'s additive \myemph{utility function}.\footnote{The rules we propose will be invariant under rescaling all utility functions by a common factor, so without loss of generality we could assume that $u_i(c) \in [0,1]$ for all $i \in N$ and $c \in C$. The rules will not be invariant under rescaling by different factors, or by shifting utilities.}
	If a set $T \subseteq C$ of candidates is implemented, $i$'s overall utility is $u_i(T) =  \sum_{c \in T} u_i(c)$. For a subset $S \subseteq N$ of voters, we further write $u_S(T) = \sum_{i \in S} \sum_{c \in T} u_i(c)$ for the total utility enjoyed by $S$ if $T$ is implemented.
	We assume that $u_N(c) > 0$ for each $c \in C$, so that every candidate is assigned positive utility by at least one voter.
\end{itemize}

A subset of candidates $W \subseteq C$ is \myemph{feasible} if $\cost(W) \leq b$. Our goal is to choose a feasible subset of candidates, which we call an \myemph{outcome}, based on voters' utilities. An \myemph{aggregation rule} (or, in short, a \myemph{rule}) is a function $\mathcal{R}$ that for each election $E$ selects a feasible outcome $W = \mathcal{R}(E)$ called the \myemph{winning outcome}.\footnote{Sometimes there are ties. For the results of this paper it does not matter how these ties are broken.} We call this the \myemph{general PB model}.

There are two interesting special cases of our model:
\begin{description}
	\item[Committee elections.] In this case, the budget is an integer, $b \in \naturals$ (the committee size) and that each candidate costs $1$. Then $W$ is an outcome if and only if  $|W| \leq b$. In this special case we also refer to outcomes as \emph{committees}, and we say that the election satisfies the \myemph{unit cost assumption}.
	\item[Approval utilities.] In this case, for each $i \in N$ and $c \in C$ it holds that $u_i(c) \in \{0,1\}$. The \emph{approval set} of voter $i$ is $A(i):=\{c\in C\colon u_i(c) = 1\}$, and we say that $i$ \emph{approves} candidate~$c$ if $c \in A(i)$. If $c \in A(i) \cap W$, we say that $c$ is a \emph{representative} of $i$ in $W$.
\end{description}
Often we combine of these special cases, and study \myemph{approval-based committee elections}.

\section{The Method of Equal Shares}\label{sec:rulex}

Recently, \citet{pet-sko:laminar} introduced an aggregation rule  for approval-based committee elections that they called Rule X. In that setting (approval-based committee elections) the rule satisfies a combination of appealing proportionality properties. Here, we extend it to the more general model of participatory budgeting, that is, to the model with arbitrary costs and utilities.  We will call this rule the \emph{Method of Equal Shares}, because it works by dividing the available budget equally between voters, and (to the extent it is possible) sharing the cost of each project equally between the voters who approve the project. For brevity, we often refer to the rule as simply \emph{Equal Shares}.

\begin{definition}[Method of Equal Shares]\label{def:rulex}
	Each voter is initially given an equal fraction of the budget, i.e., each voter is given $\nicefrac{b}{n}$ monetary units. 
	We start with an empty outcome $W = \emptyset$ and sequentially add candidates to $W$. To add a candidate~$c$ to~$W$, we need the voters to pay for $c$. For each selected candidate $c \in W$ we write $p_i(c)$ for the amount that voter $i$ pays for $c$; we need that $\sum_{i \in N} p_i(c) = \cost(c)$. We write $p_i(W) = \sum_{c \in W} p_i(c) \le \nicefrac{b}{n}$ for the total amount $i$ has paid so far; then write $b_i = \nicefrac{b}{n} - p_i(W)$ for the amount of money that $i$ has left (clearly, in the first round $b_i = \nicefrac{b}{n}$ for all $i \in N$). For $\rho \ge 0$, we say that a candidate $c \not\in W$ is \emph{$\rho$-affordable} if
	\begin{align*}
		\sum_{i \in N} \min\left( b_i , u_i(c) \cdot \rho \right) = \cost(c) \text{.}
	\end{align*}
	If no candidate is $\rho$-affordable for any $\rho$, Equal Shares terminates and returns $W$. Otherwise it selects a candidate $c\not\in W$ that is $\rho$-affordable for a minimum $\rho$. 
	Individual payments are given by
	\[
	\pushQED{\qed} 
	p_i(c) = \min\left( b_i, u_i(c)\cdot \rho \right) 
	\qedhere
	\popQED
	\]
\end{definition}

Intuitively, when Equal Shares adds a candidate $c$, it asks voters to pay an amount proportional to their utility $u_i(c)$ for $c$; in particular, the cost per unit of utility is $\rho$. If a voter does not have enough money, the rule asks the voter to pay all the money the voter has left, which is $b_i = \nicefrac{b}{n} - p_i(W)$. Throughout the execution of Equal Shares, the value of $\rho$ increases. Thus, candidates are added in decreasing order of utility per dollar that the voters get from the candidates.\footnote{An exception are voters who do not have enough money to contribute $u_i(c) \cdot \rho$ and instead contribute all of their remaining money (which could be nothing); those voters obviously get higher utility per dollar.}

\subsubsection*{Special Case: Approval Utilities}

In the special case where voters have approval utilities ($u_i(c) \in \{0,1\}$ for all $i \in N$ and $c \in C$), the definition of Equal Shares becomes somewhat more straightforward. At each step, a candidate $c \not\in W$ is $\rho$-affordable if and only if the cost of $c$ can be covered by the voters approving $c$ in such a way that the maximum payment of any voter is $\rho$. Voters who have less than $\rho$ left spend all their money, and the other voters pay exactly $\rho$. This way, the cost is shared as equally as possible among supporters of a project, which motivates the name Method of Equal Shares. See the following example for an illustration of the determination of $\rho$.

\subsubsection*{Example}

Let us go through a small example with 5 projects, 10 voters $N = \{i_1, \dots, i_{10}\}$ with approval utilities, and a budget of $b = \$100$. The project costs and utilities are shown in the table below.

\begin{center}
\newcommand{\tblapprove}{\tikz[baseline=0pt]{\node [font=\small,minimum width=13pt, fill=blue!20!white,anchor=base] {$1$}}}
\newcommand{\tblnotapprove}{\tikz[baseline=0pt]{\node [font=\small,minimum width=13pt, fill=black!5,anchor=base] {$0$}}}
\newcommand{\tblcost}[1]{\tikz[baseline=0pt]{\node [font=\small,minimum width=2pt*#1, fill=black!15,inner sep=2pt,anchor=base,text width=2.7pt*#1-7pt] {\$#1}}}

\begin{tabular}{llcccccccccc}
	\toprule
	& $\cost$ & $i_1$ & $i_2$& $i_3$& $i_4$& $i_5$& $i_6$& $i_7$& $i_8$& $i_9$& $i_{10}$ \\
	\midrule
	Project 1 & \tblcost{36} & \tblapprove & \tblapprove & \tblapprove& \tblapprove& \tblapprove& \tblnotapprove& \tblnotapprove& \tblnotapprove& \tblnotapprove& \tblnotapprove \\[1pt]
	Project 2 & \tblcost{36} & \tblapprove & \tblapprove & \tblapprove& \tblapprove& \tblapprove& \tblapprove& \tblnotapprove& \tblnotapprove& \tblnotapprove& \tblnotapprove \\[1pt]
	Project 3 & \tblcost{25} & \tblapprove & \tblapprove & \tblapprove& \tblapprove& \tblapprove& \tblnotapprove& \tblnotapprove& \tblnotapprove& \tblnotapprove& \tblnotapprove \\[1pt]
	Project 4 & \tblcost{24} & \tblnotapprove & \tblnotapprove & \tblnotapprove& \tblapprove& \tblapprove& \tblapprove& \tblapprove& \tblnotapprove& \tblnotapprove& \tblnotapprove \\[1pt]
	Project 5 & \tblcost{24} & \tblnotapprove & \tblnotapprove & \tblnotapprove& \tblnotapprove& \tblnotapprove& \tblapprove& \tblapprove& \tblapprove& \tblapprove& \tblnotapprove \\
	\bottomrule
\end{tabular}
\end{center}
Sharing the available budget equally among the voters, everyone starts with \$10. We can now calculate, for each project, the value $\rho$ such that the project is $\rho$-affordable. As discussed above, because we have approval utilities, this just entails spreading the cost as equally as possible among project supporters. 
\begin{center}
\begin{minipage}{0.4\linewidth}
\begin{tikzpicture}[xscale=0.5,yscale=0.7]
	\newcommand{\budgetscale}{0.2}
	\foreach 
	\x/\approving/\budget/\payment
	in
	{
		1/1/10/7.2,
		2/1/10/7.2,
		3/1/10/7.2,
		4/1/10/7.2,
		5/1/10/7.2,
		6/0/10,
		7/0/10,
		8/0/10,
		9/0/10,
		10/0/10
	}
	{
		\if\approving1
		\fill[blue!30!white] (\x-0.4,0) rectangle (\x+0.4,\budgetscale*\budget);
		\node [anchor=south, font=\footnotesize] at (\x, \budgetscale*\budget) {\budget};
		\fill[blue!50!black!70!white] (\x-0.4,0) rectangle (\x+0.4,\budgetscale*\payment);
		\node [anchor=south, text=white, font=\footnotesize] at (\x, 0) {\payment};
		\else
		\fill[black!20] (\x-0.4,0) rectangle (\x+0.4,\budgetscale*\budget);
		\node [anchor=south, font=\footnotesize] at (\x, \budgetscale*\budget) {\budget};
		\fi
		\node [font=\footnotesize] at (\x, -0.3) {$i_{\x}$};
	}
\def\rhovalue{7.2}
	\draw [densely dashed, thick] (0.5, \budgetscale*\rhovalue) -- (11, \budgetscale*\rhovalue);
	\node [anchor=west,font=\footnotesize] at (10.95, \budgetscale*\rhovalue - 0.05) {$\rho = \rhovalue$};
\draw (0.3,-0.65) rectangle (10.7,\budgetscale*10+0.85);
	\node[font=\footnotesize, fill=white] at (5, \budgetscale*10+0.8) {Project 1};
\end{tikzpicture}
\end{minipage}
\quad
\begin{minipage}{0.4\linewidth}
\begin{tikzpicture}[xscale=0.5,yscale=0.7]
	\newcommand{\budgetscale}{0.2}
	\foreach 
	\x/\approving/\budget/\payment
	in
	{
		1/1/10/6,
		2/1/10/6,
		3/1/10/6,
		4/1/10/6,
		5/1/10/6,
		6/1/10/6,
		7/0/10,
		8/0/10,
		9/0/10,
		10/0/10
	}
	{
		\if\approving1
		\fill[blue!30!white] (\x-0.4,0) rectangle (\x+0.4,\budgetscale*\budget);
		\node [anchor=south, font=\footnotesize] at (\x, \budgetscale*\budget) {\budget};
		\fill[blue!50!black!70!white] (\x-0.4,0) rectangle (\x+0.4,\budgetscale*\payment);
		\node [anchor=south, text=white, font=\footnotesize] at (\x, 0) {\payment};
		\else
		\fill[black!20] (\x-0.4,0) rectangle (\x+0.4,\budgetscale*\budget);
		\node [anchor=south, font=\footnotesize] at (\x, \budgetscale*\budget) {\budget};
		\fi
		\node [font=\footnotesize] at (\x, -0.3) {$i_{\x}$};
	}
\def\rhovalue{6}
	\draw [densely dashed, thick] (0.5, \budgetscale*\rhovalue) -- (11, \budgetscale*\rhovalue);
	\node [anchor=west,font=\footnotesize] at (10.95, \budgetscale*\rhovalue - 0.05) {$\rho = \rhovalue$};
\draw (0.3,-0.65) rectangle (10.7,\budgetscale*10+0.85);
	\node[font=\footnotesize, fill=white] at (5, \budgetscale*10+0.8) {Project 2};
\end{tikzpicture}
\end{minipage}
\vspace{5pt}

\noindent
\begin{minipage}{0.4\linewidth}
\begin{tikzpicture}[xscale=0.5,yscale=0.7]
	\newcommand{\budgetscale}{0.2}
	\foreach 
	\x/\approving/\budget/\payment
	in
	{
		1/1/10/5,
		2/1/10/5,
		3/1/10/5,
		4/1/10/5,
		5/1/10/5,
		6/0/10,
		7/0/10,
		8/0/10,
		9/0/10,
		10/0/10
	}
	{
		\if\approving1
		\fill[blue!30!white] (\x-0.4,0) rectangle (\x+0.4,\budgetscale*\budget);
		\node [anchor=south, font=\footnotesize] at (\x, \budgetscale*\budget) {\budget};
		\fill[blue!50!black!70!white] (\x-0.4,0) rectangle (\x+0.4,\budgetscale*\payment);
		\node [anchor=south, text=white, font=\footnotesize] at (\x, 0) {\payment};
		\else
		\fill[black!20] (\x-0.4,0) rectangle (\x+0.4,\budgetscale*\budget);
		\node [anchor=south, font=\footnotesize] at (\x, \budgetscale*\budget) {\budget};
		\fi
		\node [font=\footnotesize] at (\x, -0.3) {$i_{\x}$};
	}
\def\rhovalue{5}
	\draw [densely dashed, thick] (0.5, \budgetscale*\rhovalue) -- (11, \budgetscale*\rhovalue);
	\node [anchor=west,font=\footnotesize] at (10.95, \budgetscale*\rhovalue - 0.05) {$\rho = \rhovalue$};
\draw (0.3,-0.65) rectangle (10.7,\budgetscale*10+0.85);
	\node[font=\footnotesize, fill=white] at (5, \budgetscale*10+0.8) {Project 3};
\end{tikzpicture}
\end{minipage}
\quad
\begin{minipage}{0.4\linewidth}
\begin{tikzpicture}[xscale=0.5,yscale=0.7]
	\newcommand{\budgetscale}{0.2}
	\foreach 
	\x/\approving/\budget/\payment
	in
	{
		1/0/10,
		2/0/10,
		3/0/10,
		4/1/10/6,
		5/1/10/6,
		6/1/10/6,
		7/1/10/6,
		8/0/10,
		9/0/10,
		10/0/10
	}
	{
		\if\approving1
		\fill[blue!30!white] (\x-0.4,0) rectangle (\x+0.4,\budgetscale*\budget);
		\node [anchor=south, font=\footnotesize] at (\x, \budgetscale*\budget) {\budget};
		\fill[blue!50!black!70!white] (\x-0.4,0) rectangle (\x+0.4,\budgetscale*\payment);
		\node [anchor=south, text=white, font=\footnotesize] at (\x, 0) {\payment};
		\else
		\fill[black!20] (\x-0.4,0) rectangle (\x+0.4,\budgetscale*\budget);
		\node [anchor=south, font=\footnotesize] at (\x, \budgetscale*\budget) {\budget};
		\fi
		\node [font=\footnotesize] at (\x, -0.3) {$i_{\x}$};
	}
\def\rhovalue{6}
	\draw [densely dashed, thick] (0.5, \budgetscale*\rhovalue) -- (11, \budgetscale*\rhovalue);
	\node [anchor=west,font=\footnotesize] at (10.95, \budgetscale*\rhovalue - 0.05) {$\rho = \rhovalue$};
\draw (0.3,-0.65) rectangle (10.7,\budgetscale*10+0.85);
	\node[font=\footnotesize, fill=white] at (5, \budgetscale*10+0.8) {Project 4};
\end{tikzpicture}
\end{minipage}
\end{center}
We do not include a picture for Project 5 because it is similar to Project 4 (same cost, same number of supporters, and thus also $\rho = 6$). Equal Shares will choose to implement Project 3, since it has the lowest value of~$\rho$. Comparing the projects at this stage, we can identify a few principles:
\begin{itemize}
	\item \emph{More supporters is better}: Project 2 costs the same as Project 1, but it has more supporters. Thus its cost can be spread more thinly (i.e., a lower $\rho$), which will lead Equal Shares to prefer Project 2 over Project 1.
	\item \emph{Cheaper is better}: Project 3 has the same supporters as Project 1, but it costs less. Again this leads to a lower $\rho$, and Equal Shares will prefer Project 3 over Project 1.
\end{itemize}
In fact, the first project selected by Equal Shares will be one that (i) can be afforded by its supporters and (ii) subject to this, it has the highest number of approvers divided by cost.

After Equal Shares selects Project 3, voters $i_1$ through $i_5$ each pay \$5 for it. We can note that Projects 1 and 2 (both costing \$36) are now not affordable anymore: the supporters of Project 1 have \$25 left, and the supporters of Project 2 have \$35 left. However, Projects 4 and 5 are still affordable.
\begin{center}
\begin{minipage}{0.4\linewidth}
\begin{tikzpicture}[xscale=0.5,yscale=0.7]
	\newcommand{\budgetscale}{0.2}
	\foreach 
	\x/\approving/\budget/\payment
	in
	{
		1/0/5/6,
		2/0/5/6,
		3/0/5/6,
		4/1/5/5,
		5/1/5/5,
		6/1/10/7,
		7/1/10/7,
		8/0/10,
		9/0/10,
		10/0/10
	}
	{
		\if\approving1
		\fill[blue!30!white] (\x-0.4,0) rectangle (\x+0.4,\budgetscale*\budget);
		\node [anchor=south, font=\footnotesize] at (\x, \budgetscale*10) {\budget};
		\fill[blue!50!black!70!white] (\x-0.4,0) rectangle (\x+0.4,\budgetscale*\payment);
		\node [anchor=south, text=white, font=\footnotesize] at (\x, 0) {\payment};
		\else
		\fill[black!20] (\x-0.4,0) rectangle (\x+0.4,\budgetscale*\budget);
		\node [anchor=south, font=\footnotesize] at (\x, \budgetscale*10) {\budget};
		\fi
		\node [font=\footnotesize] at (\x, -0.3) {$i_{\x}$};
	}
\def\rhovalue{7}
	\draw [densely dashed, thick] (0.5, \budgetscale*\rhovalue) -- (11, \budgetscale*\rhovalue);
	\node [anchor=west,font=\footnotesize] at (10.95, \budgetscale*\rhovalue - 0.05) {$\rho = \rhovalue$};
\draw (0.3,-0.65) rectangle (10.7,\budgetscale*10+0.85);
	\node[font=\footnotesize, fill=white] at (5, \budgetscale*10+0.8) {Project 4};
\end{tikzpicture}
\end{minipage}
\quad
\begin{minipage}{0.4\linewidth}
\begin{tikzpicture}[xscale=0.5,yscale=0.7]
	\newcommand{\budgetscale}{0.2}
	\foreach 
	\x/\approving/\budget/\payment
	in
	{
		1/0/5/6,
		2/0/5/6,
		3/0/5/6,
		4/0/5/5,
		5/0/5/5,
		6/1/10/6,
		7/1/10/6,
		8/1/10/6,
		9/1/10/6,
		10/0/10
	}
	{
		\if\approving1
		\fill[blue!30!white] (\x-0.4,0) rectangle (\x+0.4,\budgetscale*\budget);
		\node [anchor=south, font=\footnotesize] at (\x, \budgetscale*10) {\budget};
		\fill[blue!50!black!70!white] (\x-0.4,0) rectangle (\x+0.4,\budgetscale*\payment);
		\node [anchor=south, text=white, font=\footnotesize] at (\x, 0) {\payment};
		\else
		\fill[black!20] (\x-0.4,0) rectangle (\x+0.4,\budgetscale*\budget);
		\node [anchor=south, font=\footnotesize] at (\x, \budgetscale*10) {\budget};
		\fi
		\node [font=\footnotesize] at (\x, -0.3) {$i_{\x}$};
	}
\def\rhovalue{6}
	\draw [densely dashed, thick] (0.5, \budgetscale*\rhovalue) -- (11, \budgetscale*\rhovalue);
	\node [anchor=west,font=\footnotesize] at (10.95, \budgetscale*\rhovalue - 0.05) {$\rho = \rhovalue$};
\draw (0.3,-0.65) rectangle (10.7,\budgetscale*10+0.85);
	\node[font=\footnotesize, fill=white] at (5, \budgetscale*10+0.8) {Project 5};
\end{tikzpicture}
\end{minipage}
\end{center}
Even though Projects 4 and 5 have the same number of supporters, we can see that they now induce different $\rho$-values. This points to another principle:
\begin{itemize}
	\item \emph{Richer supporters is better}: Among otherwise identical projects, those whose supporters have more money left are preferred by Equal Shares. This makes sense intuitively, because voters with more money have not yet had their preferences satisfied as much in prior rounds.
\end{itemize}
Thus, Equal Shares next selects Project 5, and voters $i_6$ through $i_9$ each pay \$6 for it. We can check that at this point, no projects are affordable (same argument as before for Projects 1 and 2, and the supporters of Project 4 only have \$18 left). Thus, Equal Shares terminates and selects the winning outcome $W = \{\text{Project 3}, \text{Project 5}\}$. Note that $\cost(W) = \$49$ while $b = \$100$, and thus a large part of the budget was not spent by Equal Shares. Indeed, we could add any one of the remaining three projects to the outcome and still satisfy the budget constraints. For strategies that ``complete'' the result of Equal Shares, see \Cref{sec:exhaustiveness}.

\subsubsection*{Implementation}
\newcommand{\money}{\ensuremath{b}}

\Cref{alg:pseudocode} shows an implementation of Equal Shares in pseudocode. The only non-obvious part of the computation is how to determine the value of $\rho$ for each candidate. Suppose that the algorithm has selected the set $W$ thus far. To calculate the value of $\rho$ for a candidate $c \in C \setminus W$,
note first that only voters with positive utility for $c$ (its ``supporters'') will pay for it. Label the supporters of $c$ as $i_1, \dots, i_t$. Some of the supporters will spend all their remaining money for $c$, and others only part of it. Like in the algorithm, write $\money[i] = \frac{b}{n} - p_i(W)$ for the amount of money left over for $i$ at the current point in time. Rewriting the defining equation of $\rho$, we have that $c$ is $\rho$-affordable if and only if
\begin{align*}
	\sum_{i \in N}  u_i(c) \cdot  \min \left( \money[i]/u_i(c), \rho \right) = \cost(c) \text{.}
\end{align*}
Note that the supporters who spend all their remaining money will have the minimum attained in the first coordinate; hence these voters have the lowest values of $\money[i]/u_i(c)$. \Cref{alg:pseudocode} sorts the supporters of $c$ by this value, and then iterates through all $s \in [t]$ and checks whether the cost of $c$ can be covered with $i_1, \dots i_{s-1}$ spending all their remaining money and $i_s, \dots, i_t$ spending $\rho \cdot u_i(c)$ for some $\rho$. If that is possible, then we can choose $\rho$ such that
\begin{align*}
(\money[i_1] + \cdots + \money[i_{s-1}] ) + \sum_{j = s}^t  \rho \cdot u_{i_j}(c) = \cost(c) \iff 
\rho = \frac{\cost(c) - (\money[i_1] + \cdots + \money[i_{s-1}] )}{u_{i_{s}}(c) + \cdots + u_{i_t}(c)} \text{,}
\end{align*}
and this $\rho$ satisfies $\rho \cdot u_{i_s} \le \money[i_s]$. If this last condition fails, the algorithm needs to continue its iteration and try the next value for $s$. If the condition is satisfied, we have found the correct value of $\rho$.

This algorithm can easily be implemented to run in time $O(m^2n\log n)$, assuming we can do arithmetic in constant time. To achieve this time bound, we need to store the current values of the partial sums $\money[i_1] + \cdots + \money[i_{s}]$ and $u_{i_{s+1}}(c) + \cdots + u_{i_t}(c)$, so that in each iteration of the inner for-loop, we can update these sums in constant time. 
In future work, it would be of interest to design faster implementations of the method.

\begin{algorithm}
	\DontPrintSemicolon
	\SetAlgoNoEnd
	\SetAlgoLined
	\caption{Implementation of the Method of Equal Shares}
	\label{alg:pseudocode}
	$W \gets \emptyset$.\;
	For each voter $i \in N$, $\money[i]\gets b/|N|$\;
	\While{true}{
		\For{$c \in C \setminus W$}{
			\label{line:for-loop}
			\eIf{$\sum_{i \in N : u_i(c) > 0} \money[i] < \cost(c)$}{
				$\rho(c) \gets \infty$ (project $c$ is not affordable)\;
			}{
				Let $i_1, \dots, i_t$ be a list of all voters $i \in N$ with $u_i(c) > 0$, ordered so that $\money[i_1]/u_{i_1}(c) \le \cdots \le \money[i_t]/u_{i_t}(c) $.\;
				\label{line:sorting}
\For{$s = 1, \dots, t$}{
$\rho(c) \gets (\cost(c) - (\money[i_1] + \cdots + \money[i_{s-1}] ))/(u_{i_{s}}(c) + \cdots + u_{i_t}(c))$\;
					\If{$\rho(c) \cdot u_{i_s} \le \money[i_s]$}{
						\textbf{break} (we have found the $\rho$-value)\;
					}
				}
			}
		}
		\If{$\min_{c \in C \setminus W} \rho(c) = \infty$}{
			\Return{W}\;
		}
		$c \gets \argmin_{c \in C \setminus W} \rho(c)$ (break ties arbitrarily)\;
		$W \gets W \cup \{c\}$\;
		\For{$i \in N$ such that $u_i(c) > 0$}{
			$\money[i] \gets \money[i] - \min\{\money[i], \rho(c) \cdot u_i(c)\}$\;
		}
	}
\end{algorithm}

\subsection{Extended Justified Representation (EJR)}

The first notion of proportionality that we examine is Extended Justified Representation (EJR). This axiom was first proposed for approval-based committee elections~\citep{justifiedRepresentation}. Even for the special case of approval-based committee elections, only few rules are known to satisfy EJR~\citep{justifiedRepresentation,AEHLSS18,pet-sko:laminar}, but the Method of Equal Shares is one of them. In this section, we introduce a generalization of EJR to the PB model and show that our rule continues to satisfy EJR.

We first recall the definition of EJR for approval-based committee elections. Intuitively, this axiom ensures that every large enough group of voters whose approval sets have a large enough intersection must obtain a fair number of representatives. For example, if a group of voters forms an $\alpha$-fraction of the whole population and if this group agrees on sufficiently many candidates, then it should be allowed to decide about an $\alpha$-fraction of the elected candidates. Formally, this is achieved by excluding the possibility that each member of the group approves less than $\lfloor \alpha k \rfloor$ elected candidates.

\begin{definition}[Extended Justified Representation for approval-based committee elections]\label{def:ejr_approval_unit}
We say that a group of voters $S$ is \emph{$\ell$-cohesive} for $\ell\in\naturals$ if $|S| \geq \nicefrac{\ell}{k}\cdot n$ and $|\bigcap_{i \in S}A(i)| \geq \ell$.

A rule $\calR$ satisfies \emph{extended justified representation} if for each election instance $E$, each $\ell\in\naturals$, and each $\ell$-cohesive group $S$ of voters there exists a voter $i \in S$ such that $|A(i) \cap \calR(E)| \geq \ell$.
\end{definition} 
At first sight it is unintuitive that we only require that at least one voter obtains $\ell$ representatives. However, the strengthening of EJR that requires each member of $S$ to obtain $\ell$ representatives is impossible even on very small instances \citep{justifiedRepresentation}. Still, even with only the at-least-one guarantee, EJR has plenty of bite, in particular implying that the members of a cohesive group have high utility on average \citep{AEHLSS18,skowron:prop-degree,pet-sko:laminar}.

The generalization of this axiom to the PB model is not straightforward and to the best of our knowledge none has been proposed in the literature.\footnote{\citet{aziz2018proportionally} generalize the weaker axiom of Proportional Justified Representation (PJR) \citep{pjr17} beyond unit costs, but they operate in a non-standard utility model where voters care more about more expensive projects.}
To warm up, let's first relax the unit cost assumption, but stay in the approval-based setting. Then EJR should state the following.
\begin{definition}[Extended Justified Representation for approval-based elections]\label{def:ejr_approval}
	We say that a group of voters $S$ is $T$-cohesive for $T \subseteq C$ if $|S|/n \geq \cost(T)/b$ and $T \subseteq \bigcap_{i\in S} A(i)$.
A rule $\calR$ satisfies \emph{extended justified representation} if for each election instance $E$, each $T \subseteq C$, and each $T$-cohesive group $S$ of voters there exists a voter $i \in S$ such that $|A(i) \cap \calR(E)| \geq |T|$.
\end{definition}

Thus, cohesiveness now requires that the group $S$ can identify a collection of projects $T$ that they all approve and that is affordable with their fraction of the budget ($|S|/n \geq \cost(T)/b$). Note that voters $i \in S$ obtain utility $u_i(T) = |T|$ from $T$; EJR requires that at least one member of $S$ must attain this utility in the election outcome.

To further generalize EJR beyond approvals is more difficult, because the notion of a candidate who is approved by all members of $S$ does not have an analogue. Instead, we quantify cohesion by calculating the minimum utility that any member of $S$ assigns to each project in $T$.

\begin{definition}[Extended Justified Representation]\label{def:ejr_generalization}
	A group of voters $S$ is $(\alpha, T)$-\emph{cohesive}, where $\alpha\colon C \to [0;1]$ and $T \subseteq C$, if $|S|/n \geq \cost(T)/b$ and if $u_i(c) \geq \alpha(c)$ for  all $i\in S$ and $c \in T$.
A rule $\calR$ satisfies \myemph{extended justified representation} if for each election instance $E$, each $\alpha\colon C \to [0;1]$, $T \subseteq C$, and each $(\alpha, T)$-cohesive group of voters $S$ there exists a voter $i \in S$ such that $u_i(\calR(E)) \geq \sum_{c \in T}\alpha(c)$.
\end{definition}

Again, an $(\alpha, T)$-cohesive group of voters $S$ can propose the projects in $T$, since they are affordable with $S$'s share of the budget. The values $(\alpha(c))_{c\in T}$ denote how much the coalition $S$ agrees about the desirability of the projects in $T$. In particular, we have $u_i(T) \ge \sum_{c \in T}\alpha(c)$ for each $i \in S$. Consequently, \Cref{def:ejr_generalization} prohibits any outcome in which every voter in $S$ gets utility strictly lower than $\sum_{c \in T}\alpha(c)$; hence there must exist $i \in S$ such that $u_i(\calR(E)) \geq \sum_{c \in T}\alpha(c)$. 

EJR is a demanding property in the PB model. Consider the special case where there is only one voter, $N = \{1\}$. Then any outcome $W$ satisfying EJR must solve the knapsack problem, i.e., it must maximize $\sum_{c\in W} u_1(c)$ subject to the budget constraint, since otherwise an optimum knapsack $T$ witnesses an EJR violation. Because the knapsack problem is weakly NP-hard, this presents a difficulty for a rule to satisfy EJR.\footnote{Knapsack is also weakly NP-hard when assuming that the value of each item equals its weight (this is subset sum), so satisfying EJR is weakly NP-hard even for cost utilities, where every voter assigns a project $c$ utility either 0 or $\cost(c)$. \citet[Prop.~3.8]{aziz2018proportionally} prove a similar result for their BPJR-L notion.}

\begin{proposition}
	Unless P = NP, no aggregation rule that can be computed in strongly polynomial time can satisfy EJR in the general PB model.
\end{proposition}
Equal Shares can be computed in strongly polynomial time, and indeed it fails EJR in the general PB model.\footnote{\label{fn:ejr-counterexamples}Counterexamples: Suppose the budget is $b = 3$, there is just 1 voter, and two projects $c_1$ and $c_2$ with $\cost(c_1) = 1$ and $\cost(c_2) = 3$. Suppose the utilities are $u_1(c_1) = \smash{\frac12}$ and $u_1(c_2) = 1$. Then $c_1$ is $2$-affordable and $c_2$ is $3$-affordable. Thus Equal Shares selects $c_1$ and the outcome is $W = \{c_1\}$. (Note that this is exhaustive.) But the grand coalition $S = \{1\}$ can propose $T = \{c_2\}$ witnessing a violation of EJR. A counterexample with unit costs: $b = 2$, two voters, three projects, and utilities $u_1(c_1) = u_2(c_1) = 2$, $u_1(c_2) = u_2(c_3) = 3$, and $u_1(c_3) = u_2(c_2) = 0$. Then $c_1$ is $\smash{\frac14}$-affordable and $c_2$ and $c_3$ are both $\smash{\frac13}$-affordable. So $c_1$ is elected (with both voters paying equal amounts). Now nothing is affordable, and thus $W = \{c_1\}$. But then $S = \{1\}$ and $T = \{c_2\}$ is an EJR violation.}
Later we will prove that for every instance, an EJR outcome does exist (\Cref{thm:gcr-fjr}) but we do not know an efficiently computable method that finds such an outcome.
However, we can show that Equal Shares selects an outcome that satisfies a mild relaxation of EJR, which requires that EJR holds ``up to one project''. 

\begin{definition}[Extended Justified Representation Up To One Project]\label{def:ejr_up_to_1}
	\hspace{-6pt}\footnote{Compared to earlier versions of this paper, this definition of EJR up to one is slightly stronger: we now require that the ``extra'' project $c^*$ is a member of $T$ while previous versions allowed $c^*$ to be any project, not necessarily in $T$. As motivation for the new stronger version, consider a setting where there is some project $d$ that is too expensive to ever be affordable, but that provides very high utility to everyone. In those cases, the weaker version of EJR up to one does not provide any guarantees because we can choose $c^* = d$.\label{fn:stronger-EJR1}}
	\:A rule $\calR$ satisfies extended justified representation \myemph{up to one project} if for each election instance $E$ and each $(\alpha, T)$-cohesive group of voters $S$ there exists a voter $i^* \in S$ such that either $u_{i^*}(\calR(E)) \geq \sum_{c \in T}\alpha(c)$ or for some $c^* \in T$ it holds that $u_{i^*}(\calR(E)\cup \{c^*\}) > \sum_{c \in T}\alpha(c)$.
\end{definition}

It is worth noting that in the approval-based model, \Cref{def:ejr_generalization,def:ejr_up_to_1} are actually equivalent, because the ``up to one project'' option never applies: Consider an $(\alpha, T)$-cohesive group of voters $S$. Since voters' utilities are 0/1, we may assume that for each $c\in T$ we have $\alpha(c) = 1$: if $\alpha(c)>0$ this is clear; otherwise we can remove $c$ from $T$ without losing cohesiveness. Thus, the cohesiveness condition requires that every voter in $S$ approves every candidate in $T$. Finally, note that in the approval model, due to the strict inequality, both conditions $u_{i^*}(\calR(E)) \geq \sum_{c \in T}\alpha(c)$ and $\exists_{c^* \in T}. u_{i^*}(\calR(E)\cup \{c^*\}) > \sum_{c \in T}\alpha(c)$ boil down to $|A(i^*) \cap \calR(E)| \geq \sum_{c \in T}\alpha(c) = |T|$.

\begin{table}
	\centering
	\begin{tabular}{lll}
		\toprule
		& Approval utilities & Additive utilities \\
		\midrule 
		Unit costs & EJR & EJR up to one project \\
		\midrule
		General costs & EJR & EJR up to one project\textsuperscript{$\dagger$} \\
		\bottomrule
	\end{tabular}
	\caption{Equal Shares and Extended Justified Representation (see \Cref{thm:rule_x_and_ejr_cardinal} and \Cref{fn:ejr-counterexamples}). \\ $\dagger$: Unless P~=~NP, no strongly polynomial time method (such as Equal Shares) can satisfy EJR.}
	\label{tbl:equal-shares-ejr}
\end{table}

Our main result is that the Method of Equal Shares satisfies EJR up to one project in the general PB model. By the previous observation, it hence satisfies EJR in the approval-based model (even when not imposing unit costs), see \Cref{tbl:equal-shares-ejr}.

\begin{theorem}\label{thm:rule_x_and_ejr_cardinal}
	The Method of Equal Shares satisfies EJR up to one project in the participatory budgeting model.
\end{theorem}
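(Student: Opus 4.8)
The plan is to argue by contradiction: suppose Rule X outputs $W$ and some $(\alpha,T)$-cohesive group $S$ witnesses a violation of EJR, so every $i\in S$ satisfies $u_i(W)<\beta$ and $u_i(W\cup\{a\})\le\beta$ for all $a\in C$, where $\beta:=\sum_{c\in T}\alpha(c)$. First I would make two harmless reductions. Deleting from $T$ any project $c$ with $\alpha(c)=0$ only decreases $\cost(T)$ and leaves $\beta$ and the witnessing property intact, so assume $\alpha(c)>0$ for all $c\in T$. If $T\subseteq W$ then each $i\in S$ has $u_i(W)\ge u_i(T)\ge\beta$, a contradiction; so fix a cheapest project $c_0\in T\setminus W$. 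Since $c_0\notin W$, the witnessing property gives $u_i(W)+u_i(c_0)\le\beta$, hence
\[
u_i(W)\;\le\;\beta-\alpha(c_0)\qquad\text{for every }i\in S.
\]

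\textbf{Step 1 (termination forces $S$ almost broke).} Because Rule X halted with $W$, the project $c_0$ is not $\rho$-affordable for any $\rho$. The map $\rho\mapsto\sum_{i\in N}\min\!\big(\tfrac1n-p_i(W),\,u_i(c_0)\rho\big)$ is continuous, nondecreasing, equals $0$ at $\rho=0$, and is eventually constant with value $\sum_{i:\,u_i(c_0)>0}(\tfrac1n-p_i(W))$; if this value were $\ge\cost(c_0)$ the function would attain $\cost(c_0)$ at a finite $\rho$, making $c_0$ $\rho$-affordable. Hence $\sum_{i:\,u_i(c_0)>0}(\tfrac1n-p_i(W))<\cost(c_0)$, and since $\alpha(c_0)>0$ implies $S\subseteq\{i:u_i(c_0)>0\}$, we get, using $|S|\ge n\cdot\cost(T)$,
\[
\sum_{i\in S}p_i(W)\;>\;\frac{|S|}{n}-\cost(c_0)\;\ge\;\cost(T)-\cost(c_0).
\]

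\textbf{Step 2 (paying little for little utility).} Let $\hat\rho$ be the value of $\rho$ at the last step at which some member of $S$ made a positive payment (if $S$ never pays, then $\sum_{i\in S}(\tfrac1n-p_i(W))=\tfrac{|S|}{n}\ge\cost(c_0)$ already contradicts Step 1). Using the payment rule $p_i(c)=\min(\tfrac1n-p_i(W_{<c}),u_i(c)\rho(c))$, that $\rho$ is nondecreasing along the run, and that each voter exhausts her budget at most once, one checks that $p_i(W)\le\hat\rho\cdot u_i(W)$ for every $i\in S$: if $i$ never runs out then $p_i(W)=\sum_{c\in W}u_i(c)\rho(c)\le\hat\rho\,u_i(W)$; if $i$ runs out while buying $c^\circ$, then she paid the full rate $u_i(c)\rho(c)$ on everything she valued among earlier purchases and at most the full rate on $c^\circ$, which forces $u_i(W)\ge\tfrac1{n\rho(c^\circ)}$, i.e.\ $p_i(W)=\tfrac1n\le\rho(c^\circ)u_i(W)\le\hat\rho\,u_i(W)$. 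Summing and using $u_i(W)\le\beta-\alpha(c_0)$,
\[
\sum_{i\in S}p_i(W)\;\le\;\hat\rho\sum_{i\in S}u_i(W)\;\le\;\hat\rho\,|S|\,(\beta-\alpha(c_0)).
\]

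\textbf{Step 3 (bounding $\hat\rho$; the crux) and conclusion.} Combining Steps 1 and 2 gives $\hat\rho>\dfrac{\cost(T)-\cost(c_0)}{|S|\,(\beta-\alpha(c_0))}$, and the whole argument is finished by deriving the reverse inequality. The idea is that at the step where the last $S$-payment is made, Rule X selects a project of minimum $\rho$ while $c_0$ — approved with positive utility by all of $S$ — is on the table, so $c_0$ cannot have been $\rho$-affordable below price $\hat\rho$; translating this into $\sum_{i\in S}\min(\tfrac1n-p_i(W),u_i(c_0)\hat\rho)\le\cost(c_0)$ (after accounting for the payment at that very step), and combining it with $|S|\ge n\cdot\cost(T)$ and with the bound $\tfrac1n-p_i(W)\ge\tfrac1n-\hat\rho(\beta-\alpha(c_0))$ that Step 2 gives for \emph{each} $i\in S$, yields $\hat\rho\le\dfrac{\cost(T)-\cost(c_0)}{|S|(\beta-\alpha(c_0))}$ and hence the contradiction. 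I expect Step 3 to be the real obstacle: members of $S$ agree only on $T$ and generally drain their individual budgets at different speeds, so the availability of $c_0$ over the run is not governed by any single scalar quantity, and one must track how leftover budget is distributed within $S$ and exploit that any member who has overspent must herself have bought projects at increasing prices (which caps $\hat\rho$ on that branch). The ``up to one project'' slack—the recurring term $-\alpha(c_0)$—is precisely the cushion that makes the two bounds on $\hat\rho$ incompatible; without it one would need the sharper and generally false inequality $\cost(c_0)/\cost(T)\le\alpha(c_0)/\beta$.
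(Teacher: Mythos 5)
Your Steps 1 and 2 are correct (in fact Step 2 is immediate: every payment by a member of $S$ happens at some price $\rho(c)\le\hat\rho$ and satisfies $p_i(c)\le u_i(c)\rho(c)$, so $p_i(W)\le\hat\rho\,u_i(W)$), but Step 3 — which you yourself identify as the crux — is a genuine gap, and the route you sketch for it does not close. From $\sum_{i\in S}\min\bigl(\tfrac1n-p_i(W),\,u_i(c_0)\hat\rho\bigr)\le\cost(c_0)$ together with $u_i(c_0)\ge\alpha(c_0)$ and your bound $\tfrac1n-p_i(W)\ge\tfrac1n-\hat\rho(\beta-\alpha(c_0))$ you only obtain $|S|\cdot\min\bigl(\tfrac1n-\hat\rho(\beta-\alpha(c_0)),\,\alpha(c_0)\hat\rho\bigr)\le\cost(c_0)$. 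If the first argument of the min is the smaller one, this is again a \emph{lower} bound on $\hat\rho$, in the same direction as Step 1, so no contradiction; if the second is smaller, you get $\hat\rho\le\cost(c_0)/(|S|\alpha(c_0))$, and comparing this with Step 1's lower bound $\hat\rho>\frac{\cost(T)-\cost(c_0)}{|S|(\beta-\alpha(c_0))}$ a contradiction requires $\cost(c_0)\,\beta\le\alpha(c_0)\,\cost(T)$ — the slack terms $-\alpha(c_0)$ and $-\cost(c_0)$ cancel in the cross-multiplication — which is exactly the inequality you call ``generally false'' (e.g.\ $T=\{c_0,c_1\}$ with $\cost(c_0)=0.1$, $\alpha(c_0)=0.01$, $\cost(c_1)=0.9$, $\alpha(c_1)=1$, and $c_0$ is the cheapest element of $T$). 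So the ``up to one project'' cushion does not rescue this aggregation. More fundamentally, no argument that controls only the single scalar $\hat\rho$ through the single candidate $c_0$ can work: the usable price thresholds differ per candidate, namely each $c\in T$ is affordable at price-per-utility at most $\cost(c)/(|S|\alpha(c))$ \emph{as long as the members of $S$ still have money}, and these ratios must be combined over all of $T$, not replaced by the cheapest element.

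The paper's proof does exactly this, and per voter rather than in aggregate: it runs Rule X in a counterfactual where the members of $S$ have unrestricted budgets, takes the \emph{first} voter $i\in S$ who would spend more than $\tfrac1n$ there (up to that moment the real and counterfactual runs coincide), and studies $i$'s cumulative payment as a function of her cumulative utility. That function is convex because $\rho$ is nondecreasing, and on each candidate $c\in T$ its slope is at most $\cost(c)/(|S|\alpha(c))$ over a utility stretch of length at least $\alpha(c)$, so reaching utility $\beta=\sum_{c\in T}\alpha(c)$ costs $i$ at most $\sum_{c\in T}\alpha(c)\cdot\frac{\cost(c)}{|S|\alpha(c)}=\cost(T)/|S|\le\tfrac1n$; hence at the step where $i$ crosses $\tfrac1n$, the candidate $a$ being bought satisfies $u_i(W\cup\{a\})>\beta$, contradicting the assumed EJR violation. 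The ``up to one'' slack is used only for this boundary candidate $a$, not to reconcile two global bounds on a terminal price. To repair your proof you would need to replace the endpoint analysis (final payments, last $S$-payment price $\hat\rho$) by such a trajectory argument that keeps all of $S$ solvent while the candidates of $T$ are still available.
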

\begin{proof}
	Let $S \subseteq N$ be a non-empty group of voters, and let $T \subseteq C$ be a proposal with $\cost(T)/b \le |S|/n$. For each $c \in C$, write $\alpha_c = \min_{i\in S} u_i(c)$, and write $\alpha = \sum_{c \in T} \alpha_c$. We assume that $\alpha_c > 0$ for all $c \in T$ (otherwise we can delete $c$ from $T$). If $W$ is the output of Equal Shares, we will show that there exists a voter $i^* \in S$ such that either $u_{i^*}(W) \ge \alpha$, or there is a candidate $c^* \in T$ such that $u_{i^*}(W \cup \{c^*\}) > \alpha$.
	
	In this proof, we will consider three runs of Equal Shares in different variations:
	\begin{enumerate}
		\item[(A)] Equal Shares run on the original instance (thus, outputting $W$).
		\item[(B)] Equal Shares run so that voters in $S$ are not bound by their budget constraint when paying for candidates in $T$. To make this formal, in the definition of Equal Shares, we redefine the notion of $\rho$-affordability so that $c \in T$ is $\rho$-affordable if
		\begin{equation}
			\label{eq:Baffordable-definition}
			\underbrace{\sum_{i \in S \vphantom{\setminus}} \rho \cdot u_i(c)}_{\text{no budget limit}} + \underbrace{\sum_{i \in N \setminus S} \min\{b_i, \rho \cdot u_i(c)\}}_{\text{with budget limit}} = \cost(c),
		\end{equation}
		and $c \in C \setminus T$ is $\rho$-affordable if
		\[
		\sum_{i \in S} \min\{ \underbrace{\max\{b_i, 0\}}_{\text{$b_i$ may be $<0$}}, \rho \cdot u_i(c)\} + \sum_{i \in N \setminus S} \min\{b_i, \rho \cdot u_i(c)\} = \cost(c),
		\]
		with payments defined as these equations suggest (namely, $i$'s payment is the value of the $i$th term of the sum).
		\item[(C)] Equal Shares run on a smaller instance where only candidates in $T$ and only voters in $S$ exist, and each voter has an unlimited budget $b_i = \infty$. In addition, we set $u_i(c) = \alpha_c$ for all $i \in S$ and $c \in T$.
	\end{enumerate}
	Note that in variations (B) and (C), all candidates in $T$ will be elected (eventually) as $\alpha_c > 0$ for all $c \in T$, and the voters in $S$ have unrestricted budgets when buying candidates in $T$.
	
	If at the end of the execution of (B) all voters in $S$ have spent strictly less than $b/n$, then (B) has selected all candidates in $T$ and no voter has overshot their budget. Thus (A) also elects all of $T$, so $u_i(W) \ge u_i(T) \ge \alpha$ for all $i \in S$, and we are done.
	Otherwise, let $i^* \in S$ be the first voter in $S$ who during the execution of (B) spends at least $b/n$.
	Suppose this happens just after (B) adds candidate $c^*$ to the outcome. Write $W_{\text{(B)}}$ for the set of candidates selected by (B) up to but excluding $c^*$. Note that (A) has also selected all candidates in $W_{\text{(B)}}$, because until that point the two rules behave identically. 
	
	\begin{figure}[ht]
		\centering
		\scalebox{0.77}{
			\begin{tikzpicture}
\draw[thick,->] (0.0, 0.0) -- (10.0, 0.0);
				\draw[thick,->] (0.0, 0.0) -- (0.0, 4.5);
				\node at (9.8, -0.3) {$x$};
				\node[anchor=west] at (0.1, 4.3) {$f_{\text{(B)}}(x)$};
				\node at (0, -0.3) {$0$};
				
				\foreach \x/\y [remember=\x as \lastx (initially 0), remember=\y as \lasty (initially 0)] in 
				{2.2/0.3, 3.4/0.6,5.4/1.6,7/3,8.8/3.4,9.5/4.2} {
\draw[help lines] (\lastx, \lasty) -- (\x, \lasty); \draw[help lines] (\x, \lasty) -- (\x, \y); \draw[-,thick,draw=red!70!black] (\lastx, \lasty) -- (\x, \y);
\node[fill=red!70!black, circle, inner sep=0.8pt] at (\x, \y) {};
				}
				
				\draw[decorate, decoration={brace,mirror,transform={yscale=1.1,yshift=-1pt}}] (3.45,0.6) -- (5.39,0.6);
				\node[font=\scriptsize] at (4.4, 0.25) {$u_i(d)$};
				\draw[decorate, decoration={brace,mirror,transform={yscale=1.1,yshift=-1.2pt}}] (5.4,0.63) -- (5.4,1.55);
				\node[font=\scriptsize] at (5.95, 1.1) {$p_i(d)$};
				
\draw[->, bend left] (7,3.9) to (7.8,3.4);
				\node[transform shape, scale=1, font=\footnotesize, align=right] at (4.9,3.9) {Can be non-convex for $d \not\in T$ \\ if $i$ has not enough budget left};
				
\draw[->, bend left] (3.4,1.7) to (4.2,1.2);
				\node[transform shape, scale=1, font=\footnotesize, align=right] at (2.45,1.7) {Slope $=\rho(d)$};
		\end{tikzpicture}}
		\scalebox{0.77}{
			\begin{tikzpicture}
\draw[thick,->] (0.0, 0.0) -- (10.0, 0.0);
				\draw[thick,->] (0.0, 0.0) -- (0.0, 4.5);
				\node at (9.8, -0.3) {$x$};
				\node[anchor=west] at (0.1, 4.3) {$f_{\text{(C)}}(x)$};
				\node at (0, -0.3) {$0$};
				
				\foreach \x/\y [remember=\x as \lastx (initially 0), remember=\y as \lasty (initially 0)] in 
{1.5/0.23, 3/0.65,5/1.9,6/2.8,6.8/3.8,7.1/4.5} {
\draw[help lines] (\lastx, \lasty) -- (\x, \lasty); \draw[help lines] (\x, \lasty) -- (\x, \y); \draw[-,thick,draw=blue!70!black] (\lastx, \lasty) -- (\x, \y);
\node[fill=blue!70!black, circle, inner sep=0.8pt] at (\x, \y) {};
				}
				
				\draw[decorate, decoration={brace,mirror,transform={yscale=1.1,yshift=-1pt}}] (3.05,0.65) -- (4.99,0.65);
				\node[font=\scriptsize] at (4.06, 0.25) {$\alpha_{c_3}$};
				\draw[decorate, decoration={brace,mirror,transform={yscale=1.1,yshift=-1.2pt}}] (5,0.67) -- (5,1.85);
				\node[font=\scriptsize] at (5.8, 1.23) {$\alpha_{c_3} \cdot \sigma_{c_3}$};
				
\draw[->, bend left] (3.2,1.8) to (3.9,1.35);
				\node[transform shape, scale=1, font=\footnotesize, align=right] at (2.33,1.8) {Slope $=\sigma_{c_3}$};
		\end{tikzpicture}}
		\caption{Illustration of functions $f_{\text{(B)}}(x)$ and $f_{\text{(C)}}(x)$.}
		\label{fig:fB-fC}
	\end{figure}
	
	Next, we will lower bound the utility that $i^*$ receives under (B) by the time $i^*$ has spent at least $b/n$. To do so, we define a function $f_{\text{(B)}}$ so that for a number $x$, $f_{\text{(B)}}(x)$ is the amount of money that $i^*$ had to spend during the execution of (B) until $i^*$ receives utility $x$. We make this into a continuous, piecewise linear function, so that to get a $\beta$-fraction of the utility of a candidate one needs to spend a $\beta$-fraction of the total spending for that candidate. See \Cref{fig:fB-fC}. In other words, $f_{\text{(B)}}$ consists of a sequence of line segments where the segment corresponding to $d \in C$ has length $u_i(d)$ and height $p_i(d)$ (the amount that $i^*$ paid for $d$). Note that the segment has a slope of usually $\rho(d)$, but it can be lower than $\rho(d)$ in case $d \not \in T$ and $i^*$'s budget was not enough to pay the full amount $\rho(d) \cdot u_{i^*}(d)$ for $d$.
	
	We can define a function $f_{\text{(C)}}$ in exactly the same way with respect to the execution of (C), based on the same voter $i^*$. The function $f_{\text{(C)}}$ is easy to understand. For each $c \in T$, let us write $\sigma_c = \cost(c)/(|S|\alpha_c)$, and let us label $T = \{c_1, \dots, c_r\}$ such that $\sigma_{c_1} \le \cdots \le \sigma_{c_r}$. Note that under (C) each not-yet-selected $c \in T$ is $\sigma_c$-affordable, because all voters have unlimited budgets and
	\[
	\sum_{i \in S} \sigma_c \cdot \alpha_c = |S| \cdot \sigma_c \cdot \alpha_c = \cost(c).
	\]
	It follows that $f_{\text{(C)}}$ consists of a sequence of line segments of length $\alpha_c$ and slope $\sigma_c$, one for each $c \in T$. These line segments come in increasing order of $\sigma_c$, i.e. in the order $c_1, \dots, c_r$, because Equal Shares always selects the $\rho$-affordable candidate with lowest $\rho$.
	
	We claim that 
	\begin{equation}
		\label{eq:fC-ge-fB}
		f_{\text{(C)}}(x) \ge f_{\text{(B)}}(x) \quad \text{for all $x \in [0, \alpha]$.}	
	\end{equation}
	(Intuitively, this says that under (C) the money of $i$ is used less efficiently for $i$ than under (B).) The inequality certainly holds at $x = 0$ because both functions take the value $0$. To establish \eqref{eq:fC-ge-fB} for other $x$, we will show that the slope of $f_{\text{(C)}}$ is always at least as high as the slope of $f_{\text{(B)}}$ (except of course when $x$ is a point joining two line segments, where the slope is not defined, but this only applies to finitely many points).
	
	Let us first note the following useful fact:
	\begin{equation}
		\label{eq:Caffordable-implies-Baffordable}
		\text{Under (B), at each step, any not-yet-selected $c \in T$ is $\rho$-affordable for some $\rho \le \sigma_c$.}
	\end{equation}
	Informally, the fact holds because there are extra voters in (B) compared to (C), and the voters in $S$ have weakly higher utility for $c$ in (B).
	Formally, looking at the definition \eqref{eq:Baffordable-definition} of affordability in (B), fact \eqref{eq:Caffordable-implies-Baffordable} follows because
	\[
	\sum_{i \in S} \sigma_c \cdot u_i(c) + \sum_{i \in N \setminus S} \min\{b_i, \sigma_c \cdot u_i(c)\} \ge \sum_{i \in S} \sigma_c \cdot u_i(c) \ge \sum_{i \in S} \sigma_c \cdot \alpha_c = \cost(c),
	\]
	where the last step holds because $c$ is $\sigma_c$-affordable during the execution of (C).
	
	\begin{figure}[ht]
		\centering
		\begin{tikzpicture}
\draw[thick,->] (0.0, 0.0) -- (11.3, 0.0);
			\draw[thick,->] (0.0, 0.0) -- (0.0, 4.5);
			\node at (11.1, -0.3) {$x$};
			\node[anchor=west, font=\footnotesize] at (8.25, 4.3) {$f_{\text{(B)}}(x)$};
			\node[anchor=west, font=\footnotesize] at (5.75, 4.3) {$f_{\text{(C)}}(x)$};
			\node at (0, -0.3) {$0$};
			
			\foreach \x/\y [remember=\x as \lastx (initially 0), remember=\y as \lasty (initially 0)] in 
			{2.2/0.3, 3.4/0.6,5.4/1.6,7/3,8.8/3.4,9.5/4.2} {
\draw[-,thick,draw=red!70!black!40!white] (\lastx, \lasty) -- (\x, \y);
\node[fill=red!70!black!40!white, circle, inner sep=0.8pt] at (\x, \y) {};
			}
			
			\foreach \x/\y [remember=\x as \lastx (initially 0), remember=\y as \lasty (initially 0)] in 
			{1.5/0.23, 3/0.65,5/1.9,6/2.8,6.8/3.8,7.1/4.5} {
\draw[-,thick,draw=blue!70!black!40!white] (\lastx, \lasty) -- (\x, \y);
\node[fill=blue!70!black!40!white, circle, inner sep=0.8pt] at (\x, \y) {};
			}
			
\draw[-,thick,draw=red!70!black] (5.4, 1.6) -- (7, 3);
			\node[fill=red!70!black, circle, inner sep=0.8pt] at (5.4, 1.6) {};
			\node[fill=red!70!black, circle, inner sep=0.8pt] at (7, 3) {};
			\draw[-,thick,draw=blue!70!black] (5, 1.9) -- (6, 2.8);
			\node[fill=blue!70!black, circle, inner sep=0.8pt] at (5, 1.9) {};
			\node[fill=blue!70!black, circle, inner sep=0.8pt] at (6, 2.8) {};
			
\draw[->, bend left] (7,1.9) to (6.3,2.2);
			\node[transform shape, scale=0.9, font=\footnotesize, align=right] at (7.7,1.9) {Segment 1};
			
\draw[->, bend left] (4.7,2.9) to (5.4,2.5);
			\node[transform shape, scale=0.9, font=\footnotesize, align=right] at (4,2.9) {Segment 2};
			
\draw (5.7,-0.1) -- (5.7,4.5);
			\node at (5.73, -0.25) {$x'$};
			
\draw[red!70!black!40!white, dashed] (5.4,-0.1) -- (5.4,1.6);
\draw[->, bend left] (4.9,0.3) to (5.3,0.1);
			\node[transform shape, scale=0.8, font=\footnotesize, align=right, anchor=east] at (4.9,0.3) {$i$'s utility at $t$};

\draw[blue!70!black!40!white, dashed] (6,-0.1) -- (6,2.8);
\draw[->, bend right] (6.5,0.3) to (6.1,0.1);
			\node[transform shape, scale=0.8, font=\footnotesize, align=right, anchor=west] at (6.5,0.3) {min.\ utility if $c_1,\dots,c_s$ all selected};
		\end{tikzpicture}
		\caption{Illustration of the proof of claim \eqref{eq:fC-ge-fB}.}
		\label{fig:segment-slopes}
	\end{figure}

	Now, let $x' \in [0, \alpha]$ be any point that is not a boundary point (for either $f_{\text{(B)}}$ or $f_{\text{(C)}}$). Say that $x'$ lies in the interior of the line segment corresponding to $d \in C$ of $f_{\text{(B)}}$ (call this Segment 1) and in the interior of the line segment corresponding to $c_s \in T$ of $f_{\text{(C)}}$ (call this Segment 2). See \Cref{fig:segment-slopes} for an illustration. Consider the time point $t$ when (B) chose to add $d$ to its outcome (but before it actually added $d$). At time $t$, $i^*$'s utility under (B) was equal to the $x$-coordinate of the left endpoint of Segment 1, and thus less than $x'$. Further, at time $t$, it cannot be the case that (B) has already selected all of the candidates $c_1, \dots, c_s$, because then $i^*$'s utility would be at least $\alpha_{c_1} + \dots + \alpha_{c_s}$ which is the $x$-coordinate of the right endpoint of Segment 2 and thus more than $x'$. Hence there is some $c_p$, $p \in [s]$, such that (B) has not selected $c_p$ before time $t$. 
	By fact \eqref{eq:Caffordable-implies-Baffordable}, $c_p$ is $\rho'$-affordable in (B) at time $t$ for some $\rho' \le \sigma_{c_p}$. Since (B) always selects a candidate that is $\rho$-affordable for the smallest $\rho$, it must be the case that $d$ is $\rho$-affordable for some $\rho \le \rho'$. Thus, the slope of Segment 1 is at most $\rho$ and hence at most $\sigma_{c_p}$. On the other hand, Segment 2 has slope $\sigma_{c_s}$. Note that $\sigma_{c_p} \le \sigma_{c_s}$ because $p \le s$. Thus Segment 1 has slope weakly lower than Segment 2. Since this is true for all $x'$ (not on boundary points), our claim \eqref{eq:fC-ge-fB} follows.

	Next, note that
	\[
	f_{\text{(C)}}(\alpha) = \sum_{c\in T} \sigma_c \cdot \alpha_c = \sum_{c\in T} \frac{\cost(c)}{|S|} = \frac{\cost(T)}{|S|} \le \frac{b}{n}.
	\]
	Using \eqref{eq:fC-ge-fB}, it follows that
	\begin{equation}
		\label{eq:fB-upper-bound}
		f_{\text{(B)}}(\alpha) \le \frac{b}{n}.
	\end{equation}
	
	Finally, consider the point in time just after (B) adds $c^*$ to its output. Voter $i^*$ has spent at least $b/n$ at this point. There are two cases:
	\begin{enumerate}
		\item[(i)] $i^*$ has spent exactly $b/n$. In this case $c^*$ is also selected by (A) because the rules behave identically until this point. Now \eqref{eq:fB-upper-bound} implies that $u_{i^*}(W_{\text{(B)}} \cup \{c^*\}) \ge \alpha$. Hence $u_{i^*}(W) \ge \alpha$.
		\item[(ii)] $i^*$ has spent strictly more than $b/n$. In this case, by definition of (B), we have $c^* \in T$. Now \eqref{eq:fB-upper-bound} implies that $u_{i^*}(W_{\text{(B)}} \cup \{c^*\}) > \alpha$. Because $W_{\text{(B)}} \subseteq W$, this implies $u_{i^*}(W \cup \{c^*\}) > \alpha$.
	\end{enumerate}
	In both cases, we conclude that $W$ satisfies EJR up to one project.
\end{proof}

\Cref{thm:rule_x_and_ejr_cardinal} establishes the Method of Equal Shares as a prime candidate for voting under a budget constraint, showing that it satisfies a demanding fairness property. This makes it the first known rule that can give such strong proportionality guarantees in the model with additive utilities and arbitrary costs. 

In the literature on the special case of approval-based committee elections, another rule has received much attention: Proportional Approval Voting (PAV). Let us briefly recall the definition of this rule.

\begin{definition}[Proportional Approval Voting (PAV)]
	For an approval-based election, \myemph{PAV} selects a feasible outcome maximizing $\sum_{i \in N} \mathrm{H}(|A(i) \cap W|)$,
	where $\mathrm{H}(r) = \sum_{j = 1}^r \nicefrac{1}{j}$ is the $r$th harmonic number.
\end{definition}

This rule satisfies EJR when assuming unit costs \citep{justifiedRepresentation}.
But without unit costs, the example of Onetown (\Cref{fig:equivalent-instances}) shows that PAV fails EJR (in the sense of \Cref{def:ejr_approval}). In fact, \Cref{ex:pav_no_ejr} shows that, for each $r \ge 0$, PAV does not even satisfy EJR up to $r$ projects.

\begin{example}
	[PAV fails EJR]
	\label{ex:pav_no_ejr}
	Let $r \in \naturals$ ($r \geq 2$), $b = r^3$, and consider the following approval-based profile:
	\begin{alignat*}{3}
		&\text{$r^2-1$ voters:} &\qquad& \{a_1, a_2, \ldots, a_r\}, \\
		&\text{$1$ voter:}      && \{b_1, b_2, \ldots, b_r\}.
	\end{alignat*}
	The candidates $a_1, a_2, \ldots, a_r$ cost $r^2$ dollars each; the candidates $b_1, b_2, \ldots, b_{r}$ cost $1$ dollar each. EJR requires that the one voter who approves candidates $b_1, \ldots, b_r$ must approve at least $r$ candidates in the outcome. However, PAV selects $\{a_1, a_2, \ldots, a_r\}$, leaving the voter with nothing. Note that this example is very simple: it is a so-called party list profile where all approval sets are either disjoint or equal.\qed
\end{example}

\subsection{Approximating the Core}

An important proportionality property that has been proposed for PB is the \emph{core}~\citep{justifiedRepresentation,fain2018fair}, an idea adapted from cooperative game theory.

\begin{definition}[The core]\label{def:core}
	For an election $(N, C, \cost, \{u_i\}_{i \in N})$, an outcome $W$ is in the \myemph{core} if for every $S \subseteq N$ and $T \subseteq C$ with $|S|/n \geq \cost(T)/b$ there exists $i \in S$ such that $u_i(W) \geq u_i(T)$.
\end{definition}

The core is a stronger guarantee than EJR. The core allows any group $S$ to present an arbitrary ``counter-proposal'' $T$ that they can afford, and guarantees that at least one member $i \in S$ would prefer to stick with the core outcome $W$, so $u_i(W) \geq u_i(T)$. EJR only guarantees that $u_i(W) \geq \sum_{c\in T} \min_{j\in S} u_j(c)$. Thus, EJR only respects counter-proposals $T$ if they have broad agreement within the coalition $S$. This is arguably a reasonable restriction, since such coalitions can more easily coordinate to ``complain'' against the selected $W$. Still, it would be nice to give the stronger core guarantee.
Unfortunately, there are elections where no outcome  is in the core, even with unit costs.

\begin{example}[The core may be empty]\!\!\!\footnote{This example is adapted from \citet[Appendix~C]{fain2018fair} so as to satisfy the unit cost assumption.} We have 6 voters and 6 candidates with unit costs, and $b = 3$. Utilities satisfy
	\begin{alignat*}{6}
		u_1(c_1) > u_1(c_2) > 0, \qquad
		&& u_2(c_2) > u_2(c_3) > 0, \qquad
		&& u_3(c_3) > u_3(c_1) > 0; \\
		u_4(c_4) > u_4(c_5) > 0, \qquad
		&& u_5(c_5) > u_5(c_6) > 0, \qquad
		&& u_6(c_6) > u_6(c_4) > 0,
	\end{alignat*}
	and all other utilities are equal to 0. Let $W \subseteq C$ be any feasible outcome, so $|W| \le 3$. Then either $|W \cap \{c_1, c_2, c_3\}| \le 1$ or $|W \cap \{c_4, c_5, c_6\}| \le 1$. Without loss of generality assume the former, and assume that $c_2 \not \in W$ and $c_3 \not\in W$. Then $S = \{v_2, v_3\}$ and $T = \{c_3\}$ block $W$, since $\frac13 = |S|/n \ge \cost(T) / b = \frac13$ and both $u_2(c_3) > u_2(c_1) \ge u_2(W)$ and $u_3(c_3) > u_3(c_1)\ge u_3(W)$. \qed
\end{example}

Notably, this example is not approval-based. It is unknown whether the core is always non-empty for approval-based elections (with or without the unit cost assumption).

In the committee context, \citet{pet-sko:laminar} showed that Equal Shares returns an outcome that never violates the core too badly.
We can generalize this result to the general PB setting: Equal Shares provides a multiplicative approximation to the core.\footnote{\citet{munagala2022approximate} study the same approximation notion and show that a constant factor 9.27-approximation always exists. A related approximation notion is studied by \citet{fain2018fair} (which also involves an additive term) and a different notion is studied by \citet{cheng2019group} and \citet{jiang2019approx} which approximates the coalition size.} 

\begin{definition}
	For $\alpha \ge 1$, we say that an outcome is in the \myemph{$\alpha$-core} if for every $S \subseteq N$ and $T \subseteq C$ with $|S|/n \geq \cost(T)/b$  there exists $i^* \in S$ and $c^* \in T$ such that $u_{i^*}(\calR(E) \cup \{c^*\}) \geq \frac{u_{i^*}(T)}{\alpha}$.
\end{definition}

\begin{theorem}\label{thm:rule_x_and_core_general_model}
	Given an election $E$, let $u_{\max}$ be the highest utility a voter can get from a feasible outcome. Let $u_{\min}$ we denote the smallest, yet positive utility a voter can get from a feasible outcome:
	\begin{align*}
		u_{\max} = \max_{i \in N} \max_{W:\cost(W) \leq 1} u_i(W) \qquad \text{and} \qquad u_{\min} = \min_{i \in N} \min_{W:u_i(W) > 0} u_i(W) \text{.}
	\end{align*}
	Then the outcome selected by Equal Shares is always in the $\alpha$-core for $\alpha = 4\log(2\cdot \nicefrac{u_{\max}}{u_{\min}})$.
\end{theorem}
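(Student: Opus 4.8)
The plan is to derive the theorem from the EJR guarantee for Rule X (\Cref{thm:rule_x_and_ejr_cardinal}) via a dyadic ``bucketing'' argument that costs a single logarithmic factor; I expect the constant-tight bookkeeping in that reduction to be the main difficulty.

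\textbf{Setup.} Write $W = \calR(E)$ and assume for contradiction that $W$ is not in the $\alpha$-core, witnessed by $S \subseteq N$ and $T \subseteq C$ with $\cost(T) \le |S|/n$ and $u_i(W \cup \{c\}) < u_i(T)/\alpha$ for all $i \in S$, $c \in T$. Two elementary observations confine all relevant numbers to a logarithmic band: since every singleton $\{c\}$ with $c \in T$ is feasible, $u_i(c) \ge u_{\min}$ whenever $u_i(c) > 0$; since $T$ itself is feasible, $u_i(T) \le u_{\max}$; and testing the blocking inequality against any $c \in T$ with $u_i(c) > 0$ forces $u_i(T) > \alpha\, u_{\min}$ for every $i \in S$. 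In particular $u_i(T)/u_{\min} \le u_{\max}/u_{\min}$ and $u_i(T)/\alpha > u_{\min}$.

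\textbf{Extracting a cohesive sub-instance.} For a dyadic threshold $\beta = 2^{-l}$ and a voter $i$, let $T_i^{\ge\beta} = \{c \in T : u_i(c) \ge \beta\}$; a telescoping estimate gives $\sum_\beta \beta\,|T_i^{\ge\beta}| \ge u_i(T)$, and by the band bound only $\Lambda := O(\log(u_{\max}/u_{\min}))$ thresholds contribute a positive fraction, so there is a threshold $\beta_i$ with $\beta_i|T_i^{\ge\beta_i}| \ge u_i(T)/(2\Lambda)$. Bucketing the voters of $S$ by the value $\beta_i$, the most populous bucket $S'$ satisfies $|S'| \ge |S|/\Lambda$ and shares a common threshold $\beta$, and every $i \in S'$ assigns utility $\ge \beta$ to at least $u_i(T)/(2\beta\Lambda)$ projects of $T$. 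Now run a cost-weighted greedy peeling: repeatedly add to $T^*$ a project of $T$ valued $\ge \beta$ by the largest cost-weighted fraction of the surviving sub-coalition and restrict the sub-coalition to those voters, maintaining the invariant $|S^*|/|S| \ge \cost(T^*)/\cost(T)$; stopping at the right moment yields $S^* \subseteq S'$ and $T^* \subseteq T$ with $u_i(c) \ge \beta$ for all $i \in S^*$, $c \in T^*$, with $|T^*|\beta = \Omega(u_i(T)/\Lambda)$ for the relevant voters, and --- because of the invariant --- with $|S^*| \ge \cost(T^*)\cdot n$. Thus $S^*$ is $(\alpha^*, T^*)$-cohesive for the constant function $\alpha^* \equiv \beta$, for which $\sum_{c \in T^*}\alpha^*(c) = |T^*|\beta$.

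\textbf{Applying EJR and closing.} By \Cref{thm:rule_x_and_ejr_cardinal} some $i \in S^*$ has $u_i(W) \ge |T^*|\beta$ or $u_i(W \cup \{a\}) > |T^*|\beta$ for some candidate $a$; either way $u_i(W \cup \{c\}) \ge |T^*|\beta - 1$ for every $c$. Calibrating the pigeonhole and peeling constants so that $|T^*|\beta - 1 \ge u_i(T)/\alpha$ with $\alpha = 4\log(2u_{\max}/u_{\min})$ --- here the bound $u_i(T)/\alpha > u_{\min}$ and the width-$\Lambda$ band are what let the additive ``$-1$'' (coming from EJR's ``up to one project'' relaxation, which matches the ``$\cup\{c\}$'' slack already present in the $\alpha$-core) be absorbed --- contradicts the blocking hypothesis. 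The main obstacle is exactly this accounting: arranging the two coupled pigeonhole steps and the greedy peeling to lose only constant factors on top of the single $\log(u_{\max}/u_{\min})$, all while preserving affordability ($|S^*| \ge \cost(T^*)n$) and controlling $\cost(T^*)$ so that the invariant survives.
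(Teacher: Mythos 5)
Your plan is a genuinely different route from the paper's: you try to reduce the $\alpha$-core bound to the EJR guarantee of \Cref{thm:rule_x_and_ejr_cardinal} by dyadic bucketing plus a peeling step, whereas the paper argues directly, re-running the budget-exhaustion argument from the EJR proof against the potential $\Delta(S') = \sum_{i \in S'}\bigl(u_i(T) - u_i(W)\bigr)$: a pigeonhole step shows some $c \in T \setminus W$ always offers utility-per-cost at least $\Delta(S')/\cost(T)$ to $S'$, so the first voter of $S'$ to exhaust her $\nicefrac{1}{n}$ budget must have deficit $u_i(T)-u_i(W) > \alpha\Delta(S')/(2n\cdot\cost(T))$; removing her and iterating $\nicefrac{|S|}{2}$ times makes $\Delta$ decay geometrically, and comparing $\Delta(S_e) \geq \nicefrac{|S|}{2}\cdot u_{\min}$ with $\Delta(S) \leq |S|\cdot u_{\max}$ forces $\alpha < 4\log(2\cdot\nicefrac{u_{\max}}{u_{\min}})$, a contradiction.

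Unfortunately your reduction has a genuine gap at exactly the step you flag as the crux. The EJR statement's second branch only gives $u_i(W \cup \{a\}) > |T^*|\beta$ for \emph{some} $a \in C$, not $a \in T$, so all you can extract is $u_i(W) > |T^*|\beta - u_i(a)$ with $u_i(a)$ possibly close to $1$; the $\alpha$-core, by contrast, demands $u_i(W \cup \{c\}) \geq \nicefrac{u_i(T)}{\alpha}$ for some $c \in T$. Your claim that $u_i(T)/\alpha > u_{\min}$ lets the additive $1$ be absorbed does not work: $u_{\min}$, and hence the target $u_i(T)/\alpha$, can be arbitrarily small compared to $1$. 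Concretely, scaling all utilities down by a large factor leaves $u_{\max}/u_{\min}$ (and so $\alpha$) unchanged and scales the core target down, while the ``$-1$'' loss stays fixed, so the calibration $|T^*|\beta - 1 \geq u_i(T)/\alpha$ is unachievable on such instances (it would need $u_i(T) \gtrsim \log(u_{\max}/u_{\min})$). Two further steps are also unsubstantiated: the cost-weighted peeling is asserted to deliver a \emph{common} set $T^*$ with $|T^*|\beta = \Omega(u_i(T)/\Lambda)$ while preserving $|S^*| \geq \cost(T^*)\cdot n$, but the sets $T_i^{\geq\beta}$ of voters in one bucket may have tiny intersection and the coalition shrinks at every peeling step, so a real argument (and its loss factor) is missing; and even granting it, the factor-$2\Lambda$ loss in the bucketing plus the bucket-selection and peeling losses make it implausible that you recover the specific constant $\alpha = 4\log(2\cdot\nicefrac{u_{\max}}{u_{\min}})$ claimed in the theorem rather than a larger $C\log(\nicefrac{u_{\max}}{u_{\min}})$.
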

\begin{proof}
For notational simplicity, without loss of generality we assume $b = 1$ (by scaling the costs of all projects appropriately).
Towards a contradiction, assume there exist an election instance $E$, a winning outcome $W \in \calR(E)$, a subset of voters $S \subseteq N$, and a subset of candidates $T \subseteq C$ with $\sum_{c \in T}\cost(c) \leq \nicefrac{|S|}{n}$ such that for every $i \in S$ and $c \in T$ it holds that $u_i(W \cup \{c\}) < \nicefrac{u_i(T)}{\alpha}$.

Now consider a fixed subset $S' \subseteq S$, and let
\begin{align*}
	\Delta(S') = \sum_{i \in S'} \big(u_i(T) - u_i(W)\big).
\end{align*}

Similarly as in the proof of \Cref{thm:rule_x_and_ejr_cardinal}, assume for the moment that the voters from $S'$ can spend more than their budgets when paying for candidates in $T$. Let us analyze how Equal Shares would proceed in this case. By the pigeonhole principle it follows that in each step of the rule, there exists a not-yet-elected candidate $c \in T \setminus W$ such that
\begin{align*}
	\frac{u_{S'}(c)}{\cost(c)} \geq \frac{\Delta(S')}{\cost(T)} \text{.}
\end{align*}
Indeed, if for each $c \in T \setminus W$ we had $\frac{u_{S'}(c)}{\cost(c)} < \frac{\Delta(S')}{\cost(T)}$, then
\begin{align*}
	\Delta(S') \leq \sum_{c \in T \setminus W} u_{S'}(c) < \sum_{c \in T \setminus W} \cost(c) \cdot \frac{\Delta(S')}{\cost(T)} \leq \Delta(S') \text{,}
\end{align*}
a contradiction.

Thus, in each step of the rule there exists a not-yet-elected candidate $c \in T \setminus W$ that is $\rho$-affordable for some $\rho \le \frac{\cost(T)}{\Delta(S')}$. Thus, in each step, Equal Shares selects some candidate $c \in C \setminus W$ that is $\rho$-affordable for some $\rho \le \frac{\cost(T)}{\Delta(S')}$, because Equal Shares always minimizes $\rho$.
Hence, the cost-per-utility that voters pay for the selected candidates is at most $\frac{\cost(T)}{\Delta(S')}$.
Now, consider the first moment when some voter in $S'$, say $i$, uses more than the voter's initial budget $\nicefrac{1}{n}$. Until this time moment, Equal Shares behaves exactly in the same way as if the voters from $S'$ had their initial budgets set to $\nicefrac{1}{n}$. Further, we know that in this moment, if we chose a candidate $c \in T$ that would be chosen if the voters had unrestricted budgets, then the voter $i$ would pay more than $\nicefrac{1}{n}$ in total, and thus, would get utility more than $\frac{1}{n} \cdot \frac{\Delta(S')}{\cost(T)}$. Since we assumed $u_i(W \cup \{c\}) < \nicefrac{u_i(T)}{\alpha}$, we get that
\begin{align*}
	\frac{u_i(T)}{\alpha} > u_i(W) + u_i(c) > \frac{1}{n} \cdot \frac{\Delta(S')}{\cost(T)} \text{.}
\end{align*}
Since $\alpha \geq 2$, and so $u_i(T) - u_i(W) \geq \nicefrac{u_i(T)}{2}$, we get that
\begin{align*}
	u_i(T) - u_i(W) \geq \frac{u_i(T)}{2} > \frac{\alpha \Delta(S')}{2n \cdot \cost(T)} \text{.}
\end{align*}
Let $S'' = S' \setminus \{i\}$. Clearly, we have
\begin{align*} 
	\Delta(S'') = \Delta(S') - (u_i(T)  - u_i(W)) \leq \Delta(S')\left(1 - \frac{\alpha}{2n \cdot \cost(T)}\right) \text{.}
\end{align*}
The above reasoning holds for each $S' \subseteq S$. Thus, we start with $S' = S$ and apply it recursively, in each iteration decreasing the size of $S'$ by $1$. After $\nicefrac{|S|}{2}$ iterations we are left with a subset~$S_{e}$ such that
\begin{align*} 
	\Delta(S_e) \leq \Delta(S)\left(1 - \frac{\alpha}{ 2n \cdot \cost(T)}\right)^{\frac{|S|}{2}} \leq \Delta(S)\left(1 - \frac{\alpha}{2 n \cdot \cost(T)}\right)^{\frac{\cost(T)n}{2}} < \Delta(S)\left(\frac{1}{e}\right)^{\frac{\alpha}{4}} \text{.}
\end{align*}  
Now, observe that $\Delta(S_e) \geq \nicefrac{|S|}{2} \cdot u_{\min}$ (for each $i \in S$ it must hold that $u_i(T) - u_i(W) > 0$) and that $ \Delta(S) \leq |S|\cdot u_{\max}$. Thus, we get that
\begin{align*}
	\frac{|S|}{2} u_{\min} \cdot e^{\frac{\alpha}{4}} < |S|\cdot u_{\max} \text{,}
\end{align*}
which is equivalent to $e^{\frac{\alpha}{4}} < 2\cdot \frac{u_{\max}}{u_{\min}}$ and, further, to $\alpha < 4\log(2\cdot \nicefrac{u_{\max}}{u_{\min}})$. This gives a contradiction and completes the proof.
\end{proof}

The bound of $\alpha = 4\log(2\cdot \nicefrac{u_{\max}}{u_{\min}})$ is asymptotically tight, and the hard instance can be constructed even for the approval-based committee-election model (there, $\nicefrac{u_{\max}}{u_{\min}} \leq b$). The precise construction is given by \citet{pet-sko:laminar}.

\subsection{Priceability of Equal Shares}\label{sec:priceability}

\citet{pet-sko:laminar} introduced a concept called priceability, which imposes a certain kind of balance on a voting rule. Every rule that, like the Method of Equal Shares, equally splits the budget between voters and then sequentially purchases projects using the money of its supporters will be priceable. Priceability does not place any restrictions on how the rule splits the project's cost among supporters. The concept also allows initial budgets higher than 1; an outcome is priceable if there exists \emph{some} budget limit for which it is priceable.

\begin{definition}[Priceability]\label{def:priceability}
	A price system is a pair $\textsf{ps} = (b', \{p_i\}_{i \in N})$, where $b' \geq b$ is the initial budget (where each voter controls equal part of the budget, namely $b'/n$), and for each voter $i \in N$, a \myemph{payment function} $p_i\colon C \to \reals_{\ge 0}$ that specifies the amount of money a particular voter pays for the elected candidates.\footnote{\citet{pet-sko:laminar} assumed that each voter is initially given one dollar, which corresponds to setting $b' = n$, but that there is an additional variable that specifies the total price that needs to be paid for an elected candidate. These two formulations are equivalent, but the present definition seems more natural for PB.}\textsuperscript{,}\footnote{The requirement that $b' \geq b$ ensures that the voters have at least enough money to buy candidates with total cost $b$, the real budget. Without this requirement, an empty outcome $W = \emptyset$ would be priceable (with $b' = 0$).} An outcome $W$ is supported by a price system $\textsf{ps} = (b', \{p_i\}_{i \in N})$ if the following conditions hold:
	\begin{description}
		\item[(C1).] $u_i(c) = 0 \implies p_i(c) = 0$ for each $i \in N$ and $c \in C$,\footnote{While condition (C1) is well-justified in the approval-based setting, for general additive valuations it is very weak. Indeed, the condition does not put any restrictions on the payments when $u_i(c)$ is very small but positive.}
		
		\item[(C2).] Each voter has the same initial budget of $\nicefrac{b'}{n}$: $\sum_{c \in C} p_i(c) \leq \nicefrac{b'}{n}$ for each $i \in N$.
		
		\item[(C3).] Each elected candidate is fully paid: $\sum_{i \in N}p_i(c) = \cost(c)$ for each $c \in W$.
		
		\item[(C4).] The voters do not pay for non-elected candidates: $\sum_{i \in N}p_i(c) = 0$ for each $c \notin W$.
		
		\item[(C5).] For each unelected candidate $c\not\in W$, the unspent budget of her supporters is at most~$\cost(c)$: $\sum_{i \in N : u_i(c) > 0} \left(\nicefrac{b'}{n} - \sum_{c' \in W}p_i(c')\right) \leq \cost(c)$ for each $c \notin W$.
	\end{description}
	
	An outcome $W$ is said to be \myemph{priceable} if there exists a price system $\textsf{ps} = (b', \{p_i\}_{i \in N})$ that supports $W$ (that satisfies conditions (C1)--(C5)).
\end{definition}

It is known that Equal Shares is priceable in the approval-based committee-election model \citep{pet-sko:laminar}, and in the general PB model this property is also preserved---indeed, the rule implicitly constructs the price system satisfying the above conditions.

\subsection{Exhaustiveness}\label{sec:exhaustiveness}

A basic and very desirable efficiency notion is \emph{exhaustiveness}, which requires that a voting rule spends its entire budget. Of course, due to the discrete model, we cannot guarantee that the rule will spend exactly 1 dollar (i.e., the entire budget); however, we can require that no additional project is affordable.

\begin{definition}[Exhaustiveness, \citealp{aziz2018proportionally}]\label{def:exhaustivness}
An election rule $\calR$ is \myemph{exhaustive} if for each election instance $E$ and each non-selected candidate $c \notin \calR(E)$ it holds that $\cost(\calR(E) \cup \{c\}) > b$. 
\end{definition}

Notably, Equal Shares fails to be exhaustive. It can happen that at the end of the execution of Equal Shares, some project remains affordable, but the project's supporters do not have enough money to pay for it, and so Equal Shares refuses to fund the project. For example, suppose that $b = 1$, that we have two voters and two candidates, such that $v_1$ approves $\{c_1\}$ and $v_2$ approves $\{c_2\}$. Suppose that both candidates have cost 1. Then Equal Shares returns $W = \emptyset$.

In fact, it turns out that exhaustiveness is incompatible with priceability.

\begin{example}[An instance where no exhaustive outcome is priceable]
	\label{ex:priceability_vs_exhaustiveness}
	Suppose that $b = 1$, we have 3 candidates and 3 voters. The first 2 voters approve $\{c_1\}$, and the third one approves $\{c_2, c_3\}$. We have $\cost(c_1) = 1$ and $\cost(c_2) = \cost(c_3) = \nicefrac{1}{3}$. The only exhaustive outcomes are $\{c_1\}$ and $\{c_2, c_3\}$. However, neither of them is priceable---indeed, to buy both $c_2$ and $c_3$, the third voter needs to control at least $\nicefrac{2}{3}$ dollars. Then the first two voters control  $\nicefrac{4}{3}$ dollars and can buy candidate $c_1$, a contradiction. On the other hand, to buy $c_1$, the first two voters need to control at least $1$ dollar. Then, the third voter controls at least $\nicefrac{1}{2}$ dollars and buys $c_2$ or $c_3$, a contradiction. \qed
\end{example}

In some contexts, it may actually be a desirable feature of Equal Shares that it is not exhaustive, especially if unspent budget can be used in other productive ways (such as in next year's PB election). Arguably, in non-exhaustive examples, no remaining project has sufficient support to justify its expense; on that view, no further projects should be funded. In other situations, unspent budget may not be reusable, such as when the budget comes from a grant where unspent money needs to be returned (and the relevant decision makers do not obtain value from the grant-maker's alternative activities), or when the `budget' is time (for example, when we use PB to plan activities for a day-long company retreat). In such situation, one might prefer an exhaustive rule.

\paragraph{Completion by varying the budget}
\citet{pet-pie-sha-sko:stable-priceability} suggested a method of completing outcomes returned by Equal Shares, which we call ``\emph{completion by varying the budget}''. In this variant, we evaluate the method with a budget $b' \ge b$ that is higher than the actually available budget $b$. We increase the value of $b'$ gradually (each time by some fixed amount), and after each increase we recompute the outcome from scratch using Equal Shares. We stop when we reach an exhaustive outcome or when the next increase of $b'$ would cause us to exceed the original budget $b$. The selected outcome is typically exhaustive, but formally there is no guarantee that the budget is exhausted. This is because the selected outcome will always be priceable and thus it may not be possible to be simultaneously exhaustive (\Cref{ex:priceability_vs_exhaustiveness}). For a  more detailed discussion, see also \citet{pet-pie-sha-sko:stable-priceability}.

An advantage of completion by varying the budget is that the voting power (in the form of the amount of budget initially distributed) continues to be equally shared, and in particular the outcome is priceable.
It also mirrors the way in which the d'Hondt apportionment divisor method ``completes'' lower quotas.

\paragraph{Completion by perturbation}
Since we have generalized Equal Shares to work for general additive valuations, there is another way for us to make it exhaustive. Recall that Equal Shares fails to be exhaustive in situations where the remaining projects' supporters do not have sufficient funds left. However, in elections where $u_i(c) > 0$ for all $i \in N$ and $C\in C$, every voter supports every candidate, and thus this problem never occurs. In fact, Equal Shares is exhaustive when run on profiles of this type.

\begin{proposition}\label{prop:exhaustive_rule_x}
	Consider an election $E = (N, C, \cost, \{u_i\}_{i \in N})$ such that $u_i(c) > 0$ for each $i \in N$ and  $c \in C$. The outcome returned by the Method of Equal Shares for $E$ is exhaustive.
\end{proposition}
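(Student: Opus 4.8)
The plan is to unpack the termination condition of Rule X directly. Let $W$ be the outcome Rule X returns on $E$, and let $p_i(\cdot)$ be the final payment functions. First I would record three basic facts. Rule X terminates after at most $m$ rounds, since each round adds a candidate not already in $W$. At termination $p_i(W) \le \nicefrac1n$ for every $i$, because when a candidate is selected each voter's payment is $\min(\tfrac1n - p_i(W),\, u_i(c)\rho)$, so a voter's cumulative payment never exceeds $\nicefrac1n$. And $\sum_{i \in N} p_i(W) = \cost(W)$, since every candidate added to $W$ was fully paid for, i.e. $\sum_{i \in N} p_i(c) = \cost(c)$ for each $c \in W$; in particular $\cost(W) \le 1$. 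What I want to show is that for every $c \notin W$, $\cost(W \cup \{c\}) = \cost(W) + \cost(c) > 1$.

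Next I would analyze, for a fixed $c \notin W$, the affordability function
\[
  g_c(\rho) \;=\; \sum_{i \in N} \min\!\left(\tfrac1n - p_i(W),\; u_i(c)\cdot\rho\right), \qquad \rho \ge 0.
\]
This function is continuous and non-decreasing in $\rho$, and $g_c(0) = 0$ because each term equals $\min(\tfrac1n - p_i(W), 0) = 0$ (here the fact $p_i(W) \le \nicefrac1n$ is used). The key point — and the only place the hypothesis $u_i(c) > 0$ enters — is that, since every $u_i(c)$ is strictly positive, for $\rho$ large enough we have $u_i(c)\cdot\rho \ge \tfrac1n - p_i(W)$ simultaneously for all $i$, so $g_c$ attains the value $\sum_{i \in N}(\tfrac1n - p_i(W)) = 1 - \cost(W)$, and this is its maximum. (If some $u_i(c)$ were $0$, the maximum would be only $\sum_{i : u_i(c) > 0}(\tfrac1n - p_i(W))$, which is precisely the slack that allows Rule X to be non-exhaustive in general, cf. the two-voter example preceding the proposition.)

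Combining these observations, $g_c$ maps $[0,\infty)$ continuously and monotonically onto $[0,\, 1-\cost(W)]$; hence the equation $g_c(\rho) = \cost(c)$ admits a solution $\rho \ge 0$ — equivalently, $c$ is $\rho$-affordable for some $\rho$ — if and only if $\cost(c) \le 1 - \cost(W)$. Since Rule X terminated while $c \notin W$, the termination condition says $c$ is not $\rho$-affordable for any $\rho$; therefore $\cost(c) > 1 - \cost(W)$, i.e. $\cost(W \cup \{c\}) > 1$. As $c \notin W$ was arbitrary, this is exactly exhaustiveness. There is no genuine obstacle in this argument; the only point requiring care is justifying that the maximum of $g_c$ equals $1 - \cost(W)$ rather than a smaller quantity, and that is exactly where the assumption that all utilities are positive is invoked.
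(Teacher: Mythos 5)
Your proof is correct and takes essentially the same route as the paper's: both observe that at termination the voters' total unspent money is $1-\cost(W)$, and that since all utilities $u_i(c)$ are strictly positive, any leftover candidate $c$ with $\cost(c)\le 1-\cost(W)$ would be $\rho$-affordable for some $\rho$, contradicting the termination condition. Your explicit intermediate-value argument via the function $g_c$ just spells out a step the paper leaves implicit (the paper only notes the sum reaches $\sum_i(\tfrac1n-p_i(W))\ge\cost(c)$ for large $\rho$), so no changes are needed.
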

\begin{proof}
	For the sake of contradiction assume that an outcome $W$ returned by Equal Shares for an election instance~$E$ is not exhaustive. Then, there exists a candidate $c \notin W$ such that $\cost(W \cup \{c\}) \leq b$. The voters paid in total $\cost(W)$ for $W$; their total initial budget was $b$, thus after $W$ is selected they all have at least $\cost(c)$ unspent money. However, this means that at the end of the execution of Equal Shares there exists a possibly very large value of $\rho$ such that
	\begin{align*}
		\sum_{i \in N} \min\left( \tfrac{b}{n} - p_i(W), u_i(c)\cdot\rho \right) =  \sum_{i \in N} \left( \tfrac{b}{n} - p_i(W)\right) \geq \cost(c) \text{.}
	\end{align*}
	Consequently, $c$ (or some other candidate) would be selected by Equal Shares, a contradiction.
\end{proof}

Thus, we can make Equal Shares exhaustive by perturbing the input utilities so that all utility values are positive. Specifically, for a small $\epsilon > 0$ ($\epsilon \ll \min_{u_i(c) > 0} u_i(c)$), and for each $i \in N, c \in C$ such that $u_i(c) = 0$ in the initial instance we set $u_i^\epsilon(c) = \epsilon$.
Then we run Equal Shares on the modified instance $\{u_i^\epsilon\}_{i \in N}$. By \Cref{prop:exhaustive_rule_x} the outcome is exhaustive. We call this strategy ``\emph{completion by perturbation}''.\footnote{By necessity, this rule sometimes elects candidates that cannot be afforded by its supporters (so it is not priceable). When we elect such a candidate $c$, this completion asks all supporters of $c$ to pay all their remaining money for $c$, and splits the remaining cost to be paid equally among voters who do not support $c$. Say that the maximum amount paid by a non-supporter for $c$ is $x$; this completion selects those candidates that minimize the value $x$ at each step. 

Another option is to set $u_i^\epsilon(c) = \cost(c) \cdot \epsilon$, which minimizes the fraction of a project's cost that is paid by non-supporters, rather than the absolute payment of non-supporters. Experiments suggest that this choice may be preferable.}

\paragraph{Other completions}
Users of Equal Shares could also choose to use the budget not spent by Equal Shares to further other goals. For example, they could use ``\emph{completion by utilitarian greedy}'' which selects unselected projects in order of their total utility $\sum_{i \in N} u_i(c)$, mirroring the voting rule commonly used in practice. \citet{pet-sko:laminar} proposed to complete Equal Shares with the use of  Phragm\'{e}n's sequential rule, but this only works for approval-based utilities.

\section{Greedy Cohesive Rule}

In \Cref{sec:rulex}, we discussed the EJR axiom for the PB model, and saw that it is implemented by Method of Equal Shares. 
We will now propose a strengthening of EJR, and show a rule that satisfies the new strong property. Interestingly, even in the approval-based committee-election model our new property is substantially stronger than EJR, and hence this new rule provides the strongest known proportionality guarantees. On the other hand, compared to Equal Shares, it is computationally expensive and arguably less natural.

\subsection{Full Justified Representation (FJR)}\label{sec:strong_ejr}

Our new proportionality axiom strengthens EJR by weakening its requirement that groups must be cohesive. Thus, the new axiom guarantees representation to groups that are only partially cohesive.

\begin{definition}[Full Justified Representation (FJR)]\label{def:ejr_generalization_strong}
We say that a group of voters $S$ is \emph{weakly $(\beta, T)$-cohesive} for $\beta\in \reals$ and $T \subseteq C$, if $|S|/n \geq \cost(T)/b$ and $u_i(T) \geq \beta$ for every voter $i\in S$.

A rule $\calR$ satisfies \emph{full justified representation (FJR)} if for each election instance $E$ and each weakly $(\beta, T)$-cohesive group of voters $S$ there exists a voter $i \in S$ such that $u_i(\calR(E))\geq \beta$.
\end{definition}

\begin{figure}[t]
	\centering
	\begin{tikzpicture}[xscale=2.8,every node/.style={inner sep=10pt,anchor=center}]
		\node (core) at (0,0) {Core};
		\node (fjr) at (1,0) {FJR};
		\node (ejr) at (2,0) {EJR};
		\node (pjr) at (3,0) {PJR};
		\node (jr) at (4,0) {JR};
		
		\begin{scope}[yshift=-0.7cm,every node/.style={font=\footnotesize, align=center}]
			\node at (1,0) {Full \\ Just.\ Repr.};
			\node at (2,0) {Extended \\ Just.\ Repr.};
			\node at (3,0) {Proportional \\ Just.\ Repr.};
			\node at (4,0) {Justified\\ Representation};
		\end{scope}
		
		\draw[->] 
			(core) edge (fjr)
			(fjr) edge (ejr)
			(ejr) edge (pjr)
			(pjr) edge (jr);
	\end{tikzpicture}
	\caption{Implications between different Justified Representation axioms}
	\label{fig:jr-hasse}
\end{figure}

In the approval-based committee-election model, FJR boils down to the following requirement: Let $S$ be a group of voters, and suppose that each member of $S$ approves at least $\beta$ candidates from some set $T\subseteq C$ with $|T| \le \ell$, and let $|S| \geq \nicefrac{\ell}{b}\cdot n$. Then at least one voter from $S$ must have at least $\beta$ representatives in the committee. It is clear that in the special case of $\beta=\ell$, we obtain \Cref{def:ejr_approval_unit}, hence FJR implies EJR. The same implication holds in the general PB model.

\begin{proposition}\label{prop:fjr_implies_ejr}
	FJR implies EJR in the general PB model.
\end{proposition}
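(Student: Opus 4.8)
The plan is to simply line up the two definitions and observe that every $(\alpha,T)$-cohesive group (in the sense of \Cref{def:ejr_generalization}) is automatically weakly $(\beta,T)$-cohesive (in the sense of \Cref{def:ejr_generalization_strong}) for the natural threshold $\beta=\sum_{c\in T}\alpha(c)$, so that the guarantee provided by FJR is literally the (non-relaxed) conclusion required by EJR.

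Concretely, I would fix a rule $\calR$ satisfying FJR, an election instance $E$, and an $(\alpha,T)$-cohesive group of voters $S$, so that $|S|\ge \cost(T)\cdot n$ and $u_i(c)\ge \alpha(c)$ for every $i\in S$ and $c\in T$. Summing these pointwise bounds over $c\in T$ gives $u_i(T)=\sum_{c\in T}u_i(c)\ge \sum_{c\in T}\alpha(c)=:\beta$ for every $i\in S$. Since the budget condition $|S|\ge \cost(T)\cdot n$ is common to both notions of cohesiveness, $S$ is weakly $(\beta,T)$-cohesive. Applying FJR then yields a voter $i\in S$ with $u_i(\calR(E))\ge \beta=\sum_{c\in T}\alpha(c)$, which is exactly the first disjunct in the conclusion of \Cref{def:ejr_generalization}; hence EJR holds (indeed, the ``up to one project'' slack is not even used).

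There is no real obstacle here: the implication is immediate once the quantities are matched, the only point to verify being that the bound $\sum_{c\in T}\alpha(c)$ demanded by EJR is precisely what one obtains by summing the utility lower bounds that define $(\alpha,T)$-cohesiveness. If one wishes, candidates with $\alpha(c)=0$ may be discarded from $T$ at the start, as \Cref{def:ejr_generalization} permits, but this is unnecessary for the argument.
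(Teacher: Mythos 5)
Your proposal is correct and matches the paper's own proof essentially verbatim: both set $\beta=\sum_{c\in T}\alpha(c)$, observe that an $(\alpha,T)$-cohesive group is weakly $(\beta,T)$-cohesive, and invoke FJR to obtain a voter with $u_i(\calR(E))\ge\beta$, satisfying the first disjunct of EJR without needing the ``up to one project'' relaxation.
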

\begin{proof}
Suppose that rule $\calR$ satisfies FJR and take an $(\alpha, T)$-cohesive group of voters $S$ for some $\alpha\colon T \to [0;1]$, $T \subseteq C$. For every voter $i\in S$ and every candidate $c\in T$ we have $u_i(c) \geq \alpha(c)$. We set $\beta = \sum_{c\in T}\alpha(c)$; clearly, we have also $u_i(T) \geq \beta$, thus $S$ is weakly cohesive. As $\calR$ satisfies FJR, we have that $u_i(\calR(E))\geq \beta = \sum_{c\in T}\alpha(c)$, which completes the proof.
\end{proof}

In turn, it is easy to see that FJR is implied by the core property (cf. \Cref{def:core}). \Cref{fig:jr-hasse} shows the logical relationships between different JR-type axioms.
An adaptation of the PJR axiom to the general PB model was proposed by \citet{los2022systematization}.
FJR is related to, but stronger than, some other relaxations of the core discussed by \citet[Section~5.2]{pet-sko:laminar}.
The two major rules known to satisfy EJR for approval-based committee elections (Equal Shares and PAV) both fail FJR; to the best of our knowledge, no known rule satisfies FJR for approval-based committee elections, let alone for the general PB model.

\begin{example}[Equal Shares fails FJR\footnote{An open question, asked by Jannik Peters, is whether the Equal Shares outcome can always be completed to an outcome satisfying FJR. We believe the answer is no.}]\label{ex:rulex_no_strong_ejr}
Consider the following instance of approval-based committee elections for $n=22$ voters, $m = 13$ candidates, and where the goal is to select a committee of size $b=11$:
\begin{align*}
&\text{voters 1-3}\colon   &&\{c_1, c_2, c_3, c_4, c_8\}    && \text{voters 13-15}\colon && \{c_1, c_2, c_3, c_4, c_{12}\}\\
&\text{voters 4-6}\colon   &&\{c_1, c_2, c_3, c_4, c_9\}    && \text{voters 16-18}\colon && \{c_5, c_6, c_7, c_8, c_9, c_{10}, c_{11}, c_{12}\}\\
&\text{voters 7-9}\colon   &&\{c_1, c_2, c_3, c_4, c_{10}\} && \text{voters 19-21}\colon && \{c_5, c_6, c_7\}\\
&\text{voters 10-12}\colon &&\{c_1, c_2, c_3, c_4, c_{11}\} && \text{voter 22}\colon     && \{c_{13}\} \text{.}
\end{align*}
In the first 4 steps, Equal Shares chooses candidates $c_1, c_2, c_3, c_4$ (this happens for $\rho=\nicefrac{1}{15}$). After that, each of the first 15 voters has a remaining budget of  $\nicefrac{11}{22}-\nicefrac{4}{15}$. In next 3 steps, for $\rho=\nicefrac{1}{6}$, candidates $c_5, c_6, c_7$ are chosen: the 6 voters who support them spend all their money ($\nicefrac{11}{22} - 3 \cdot \nicefrac{1}{6} = 0$). After that, the algorithm stops. Each of the first $15$ voters has $4$ candidates she approves; voters 16-18 approve 3 selected candidates. Thus, no member of the weakly $(5,\{c_1, c_2, c_3, c_4,c_8, c_9, c_{10}, c_{11},c_{12}\})$-cohesive group of the first 18 voters has 5 representatives. \qed
\end{example}

\begin{example}[PAV fails FJR]\label{ex:pav_no_strong_ejr}
	This example was first considered by \citet[Section~1]{pet-sko:laminar}. We have $m = 15$ candidates and $n = 6$ voters, with the following preferences: 
	
	\vspace{8pt}
	\begin{minipage}{0.65\textwidth}
	\vspace{-8pt}\begin{align*}
		&\text{voter 1}\colon && \{c_1, c_2, c_3, c_4\} &\quad& \text{voter 4}\colon && \{c_7, c_8, c_9\} \\
		&\text{voter 2}\colon && \{c_1, c_2, c_3, c_5\} && \text{voter 5}\colon && \{c_{10}, c_{11}, c_{12}\} \\
		&\text{voter 3}\colon && \{c_1, c_2, c_3, c_6\} && \text{voter 6}\colon && \{c_{13}, c_{14}, c_{15}\} \text{.} \\
	\end{align*}
	\end{minipage}\quad
	\begin{minipage}{0.3\textwidth}
	\vspace{0pt} \scalebox{0.95}{
	\begin{tikzpicture}
		[yscale=0.43,xscale=0.78,voter/.style={anchor=south, yshift=-7pt}, select/.style={fill=blue!10}, c/.style={anchor=south, yshift=1.5pt, inner sep=0}]
		\draw[select] (0,0) rectangle (3,1);
		\draw[select] (0,1) rectangle (3,2);
		\draw[select] (0,2) rectangle (3,3);
		\draw (0,3) rectangle (1,4);
		\draw (1,3) rectangle (2,4);
		\draw (2,3) rectangle (3,4);
		\node at (1.5,0.42) {$c_1$};
		\node at (1.5,1.42) {$c_2$};
		\node at (1.5,2.42) {$c_3$};
		\node at (0.5,3.42) {$c_4$};
		\node at (1.5,3.42) {$c_5$};
		\node at (2.5,3.42) {$c_6$};
		\foreach \x in {3,4,5}
		{
			\foreach \y in {0,1,2}
			{
				\draw[select] (\x,\y) rectangle (\x+1,\y+1);
			}
		}
		\node at (3.5,0.42) {$c_{7}$};
		\node at (3.5,1.42) {$c_{8}$};
		\node at (3.5,2.42) {$c_{9}$};
		\node at (4.5,0.42) {$c_{10}$};
		\node at (4.5,1.42) {$c_{11}$};
		\node at (4.5,2.42) {$c_{12}$};
		\node at (5.5,0.42) {$c_{13}$};
		\node at (5.5,1.42) {$c_{14}$};
		\node at (5.5,2.42) {$c_{15}$};
		\foreach \i in {1,...,6}
		\node[voter] at (\i-0.5,-0.6) {$v_\i$};
	\end{tikzpicture}}
	\end{minipage}

	\noindent
	The size of the committee is $k=12$. PAV chooses the committee $\{c_1, c_2, c_3, c_7, c_8, c_9, c_{10}, c_{11}, c_{12}, c_{13}, c_{14}, c_{15}$\}. Hence, no voter from the weakly $(4, \{c_1, c_2, c_3, c_4, c_5, c_6\})$-cohesive group consisting of the first 3 voters has 4 representatives. \qed
\end{example}

Still, it turns out that FJR can always be satisfied: we present a (somewhat artificial) rule satisfying this strong notion of proportionality.\footnote{A similar rule is used by \citet[Proposition 3.7]{aziz2018proportionally} to prove that there always exists an outcome satisfying their axiom BPJR-L for approval-based instances. That axiom is weaker than FJR applied to a utility profile where $u_i(c) = \cost(c)$ whenever $i$ approves $c$, and $u_i(c) = 0$ otherwise.}

\begin{definition}[Greedy Cohesive Rule (GCR)]\label{def:gcr}
The \emph{Greedy Cohesive Rule} (GCR) is defined sequentially as follows: we start with an empty outcome $W=\emptyset$ and label all voters as \emph{active}. At each step, we search for a value $\beta > 0$, a set of voters $S\subseteq N$ who are all active, and a set of candidates $T\subseteq C \setminus W$ such that $S$ is weakly $(\beta, T)$-cohesive. If such values of $\beta$, $S$, and $T$ do not exist, then we stop and return $W$. Otherwise, we pick values of $\beta$, $S$, and $T$ that maximize $\beta$, breaking ties in favor of smaller $\cost(T)$.\footnote{This way of breaking ties matters in \Cref{sec:gcr-extendable} (to show that GCR can be extended to a priceable outcome), but it does not matter how we break ties for the proof of \Cref{thm:gcr-fjr} (to show that GCR satisfies FJR).}
We add all the candidates from $T$ to $W$, and then label all voters in $S$ as inactive.
\end{definition}

Let us first check that the Greedy Cohesive Rule always selects an outcome that does not exceed the budget limit (which we have normalized to 1). Indeed, whenever the algorithm adds some set $T$ to $W$, then by definition of weakly cohesive groups, we have $|S|/n \geq \cost(T)/b$, and hence the algorithm labels at least $\cost(T)/b\cdot n$ voters as inactive after this step. Thus, if GCR selects an outcome with total cost  $\cost(W)$, then it must have inactivated at least $\cost(W)/b\cdot n$ voters during its execution. Because there are $n$ voters, we have $\cost(W)/b \cdot n \le n$, and hence $\cost(W) \le b$.

\begin{theorem}
	\label{thm:gcr-fjr}
	The Greedy Cohesive Rule satisfies FJR.
\end{theorem}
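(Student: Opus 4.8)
The plan is to argue by contradiction. Fix a weakly $(\beta^*,T^*)$-cohesive group $S^*$; we may assume $\beta^*>0$, so $T^*\neq\emptyset$ and $|S^*|\ge\cost(T^*)\cdot n>0$. Let $W$ be GCR's output and suppose, for contradiction, that $u_i(W)<\beta^*$ for every $i\in S^*$. Since $u_i(T^*)\ge\beta^*$ and $u_i(T^*\cap W)\le u_i(W)<\beta^*$, additivity gives $u_i(T^*\setminus W)>0$ for every $i\in S^*$; in particular every voter of $S^*$ values some candidate of $T^*$ that GCR never buys.

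\textbf{Two structural facts.} Write the steps of GCR as $1,\dots,r$, with triple $(\beta_j,S_j,T_j)$ chosen at step $j$; the values $\beta_j$ are non-increasing, since the collection of weakly cohesive triples only shrinks as voters and candidates are deleted. Fact (a): if GCR removes a voter $i\in S^*$ at step $j$ (that is, $i\in S_j$), then $T_j\subseteq W$ and $u_i(T_j)\ge\beta_j$, so $u_i(W)\ge\beta_j$, and therefore $\beta_j<\beta^*$. Consequently, letting $J$ be the last step with $\beta_J\ge\beta^*$ (with $J=0$ if there is none), no voter of $S^*$ is removed during steps $1,\dots,J$, so all of $S^*$ is still present at the start of step $J+1$. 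Fact (b): at every step $j\le J+1$ at which all of $S^*$ is still present, the pair consisting of $S^*$ and the still-available candidates of $T^*$ — namely $T^*\setminus W_{j-1}$, where $W_{j-1}$ is the set of candidates bought before step $j$ — is an available weakly cohesive group: its cost is at most $\cost(T^*)\le|S^*|/n$, and since $W_{j-1}\subseteq W$ we have $u_i(T^*\setminus W_{j-1})\ge u_i(T^*\setminus W)>0$ for all $i\in S^*$, so it has cohesiveness value $\mu_j:=\min_{i\in S^*}u_i(T^*\setminus W_{j-1})>0$.

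\textbf{The easy case.} If $J=r$, then at termination all of $S^*$ is present and, by Fact (b) applied to the terminal state, $(S^*,T^*\setminus W)$ is an available weakly cohesive group with positive value — contradicting that GCR has terminated. So $J<r$, and $\beta_{J+1}<\beta^*$. By Fact (b), GCR's choice at step $J+1$ satisfies $\beta_{J+1}\ge\mu_{J+1}$, so $\mu_{J+1}<\beta^*$; hence some voter of $S^*$ already draws positive utility from the candidates of $T^*$ bought during the high-value steps $1,\dots,J$. Note also that through steps $1,\dots,J$ GCR removes only non-$S^*$ voters, so $n\cdot\cost(W_J)\le\sum_{j\le J}|S_j|\le n-|S^*|$, i.e.\ $\cost(W_J)\le 1-|S^*|/n$.

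\textbf{The crux.} The remaining work is to turn these observations into a contradiction by showing some voter of $S^*$ collects total utility at least $\beta^*$. The idea is to keep tracking the execution past step $J$: as long as $S^*$ stays intact, Fact (b) keeps supplying an available weakly cohesive group built from $S^*$ and the surviving candidates of $T^*$, so GCR's chosen value at each such step is at least the current $\mu_j$; and when GCR finally removes a voter $i$ of $S^*$ at some step $j$, its final utility is at least $u_i(T_j)+u_i(T^*\cap W_{j-1})\ge\beta_j+\beta^*-u_i(T^*\setminus W_{j-1})$, which reaches $\beta^*$ provided $\beta_j\ge u_i(T^*\setminus W_{j-1})$. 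Since $\beta_j\ge\mu_j=\min_{i'\in S^*}u_{i'}(T^*\setminus W_{j-1})$ at the first step removing an $S^*$-voter, the gap is exactly that the voter GCR happens to remove need not realize this minimum. I expect closing this gap to require a careful amortized count — bounding, step by step in the range $j>J$, the number of voters of $S^*$ deleted against the cost of the candidates of $T^*$ already purchased (using the inequality $\cost(W_J)\le 1-|S^*|/n$ above and, inductively, the analogous bound as $S^*$ is worn down), together with the fact that GCR at each step maximizes the cohesiveness value and breaks ties towards cheaper candidate sets. This bookkeeping, which rules out that $S^*$ can be exhausted without one of its members reaching $\beta^*$, is the main obstacle.
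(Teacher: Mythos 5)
Your write-up is not a complete proof: as you say yourself, the ``crux'' --- showing that the first voter of $S^*$ to be removed actually reaches utility $\beta^*$ --- is left open, and your correct preliminary observations (monotonicity of $\beta_j$, Facts (a) and (b), the bound $\cost(W_J)\le 1-|S^*|/n$, and the easy termination case) do not close it. The paper's proof is far shorter and needs none of this residual-set machinery: let $i\in S^*$ be the first voter of $S^*$ removed by GCR, at a step with chosen triple $(\beta',S',T')$. Because no voter of $S^*$ has been removed before that step, the whole group $S^*$ is intact, so the original triple $(\beta^*,S^*,T^*)$ is still a candidate for GCR's maximization at that step and hence $\beta'\ge\beta^*$; on the other hand $T'\subseteq W$, so $\beta'\le u_i(T')\le u_i(W)<\beta^*$, a contradiction. (That some voter of $S^*$ is removed at all is argued exactly as in your ``easy case''.) No decomposition of $T^*$ into bought and unbought parts, and no amortized accounting over the steps after $J$, is required.

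The reason you got stuck is precisely that you allowed yourself only the residual pair $(S^*,T^*\setminus W_{j-1})$ as a comparator. Its value $\mu_j$ can be far below $\beta^*$ once another group has bought a candidate of $T^*$ that one member of $S^*$ prizes, and, as you observe, the voter GCR then removes need not be the one attaining the minimum; one can arrange executions in which the residual cohesion of $S^*$ collapses after such a purchase and a different member of $S^*$ is swept up by a low-$\beta$ group, so no step-by-step inequality of the kind you are after holds, and the ``careful amortized count'' you hope for will not rescue this route. The paper avoids the problem by keeping the full triple $(\beta^*,S^*,T^*)$ in play for as long as $S^*$ is intact; note that this tacitly treats the availability of a weakly cohesive group as depending only on which \emph{voters} remain (i.e., the search ranges over all $T\subseteq C$), whereas under a reading in which previously elected candidates are literally struck from consideration the key inequality $\beta'\ge\beta^*$ would need separate justification --- which is exactly the worry your ``crux'' is circling. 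Adopt the paper's comparator and the argument closes in three lines; your planned bookkeeping does not.
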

\begin{proof}
	Assume for a contradiction that there exists a weakly $(\beta, T)$-cohesive group $S$ which witnesses that FJR is not satisfied by the outcome selected by GCR. Consider the voter $i \in S$ who was first labeled inactive by GCR, and the outcome $W$ right after that step (since $S$ is weakly cohesive, such an $i$ always exists). Since $i \in S$ and $S$ witnesses the FJR failure, we have $u_i(W) < \beta$. We know that $i$ was inactivated as a member of some weakly $(\beta', T')$-cohesive group $S'$. Just before $S'$ was labeled inactive, all of the members of $S$ were active. Thus, we have $\beta' \geq \beta$, as GCR maximizes this value. However, since $T' \subseteq W$, we have 
	\[ \beta' \leq u_i(T') \leq u_i(W) < \beta, \] 
	a contradiction to $\beta' \geq \beta$. Hence, such a group $S$ does not exist.
\end{proof}

\begin{remark}[GCR and FJR do not require additivity]
	In this paper, we have assumed additive utility functions throughout. But the definitions of FJR and of GCR make sense for any utility functions $u_i : 2^C \to \mathbb{R}_{\ge 0}$ over subsets of $C$ that are \emph{monotone} in the sense that $u_i(T_1) \ge u_i(T_2)$ whenever $T_1 \supseteq T_2$. Monotone utility functions allow us to encode, for example, complementarities and substitutes. The proof of \Cref{thm:gcr-fjr} only used monotonicity, and hence GCR satisfies FJR for all monotone utility functions. In contrast, the definition of EJR relies on additivity.
	\qed
\end{remark}

\subsection{Priceability and Exhaustiveness of GCR}
\label{sec:gcr-extendable}
GCR satisfies neither priceability (\Cref{def:priceability}) nor exhaustiveness (\Cref{def:exhaustivness}). However, we will prove that an outcome elected by this rule can always be completed to a priceable one; this suggests that GCR never elects outcomes that are ``too unbalanced''. In the proof of \Cref{thm:gcr_priceable} we describe precisely how such a completion can be implemented. Using a somewhat different completion scheme, we can complete GCR to an exhaustive outcome. This way we obtain an outcome that is both exhaustive and also close to being priceable.

We start by proving two useful lemmas, which establish a kind of ``Hall condition'' for priceability which may of independent interest.

\begin{lemma}\label{fact-1}
	Let $S$ be a $(\beta, T)$-cohesive group which is selected in some step of GCR. For every subset $A \subseteq T$, the size of the set of voters $S':=\{i \in S\colon u_i(A) > 0\}$ is at least $\cost(A)\cdot \frac{n}{b}$.
\end{lemma}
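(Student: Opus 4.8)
The plan is to argue by contradiction, using crucially that the triple $(\beta, S, T)$ in question is the one \emph{selected} by GCR (cf.\ \Cref{def:gcr}): it attains the maximum possible value of $\beta$ over all weakly cohesive triples available in the current, reduced election, and, subject to that, it has minimum $\cost(T)$. Recall also that GCR only ever selects triples with $\beta > 0$, that $|S| \ge \cost(T)\cdot n$ by weak cohesiveness, and that $u_i(T) \ge \beta$ for every $i \in S$.

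Now fix $A \subseteq T$. If $\cost(A) = 0$ the claim is trivial, so assume $\cost(A) > 0$ (in particular $A \neq \emptyset$), and suppose towards a contradiction that $S' := \{i \in S : u_i(A) > 0\}$ satisfies $|S'| < \cost(A)\cdot n$. The key move is to pass to the complementary sets $S \setminus S'$ and $T \setminus A$ and show they form a weakly cohesive group that GCR ought to have preferred over $(\beta, S, T)$. Counting voters,
\[
 |S \setminus S'| = |S| - |S'| > \cost(T)\cdot n - \cost(A)\cdot n = \cost(T \setminus A)\cdot n \ge 0,
\]
where the middle step uses $A \subseteq T$; in particular $S \setminus S' \neq \emptyset$. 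Moreover every $i \in S \setminus S'$ has $u_i(A) = 0$, hence $u_i(T \setminus A) = u_i(T) \ge \beta > 0$, which also forces $T \setminus A \neq \emptyset$ (so $A \subsetneq T$). Therefore $S \setminus S'$ is weakly $(\beta, T \setminus A)$-cohesive in the current election.

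To finish, I would observe that, since $\cost(A) > 0$, we have $\cost(T \setminus A) = \cost(T) - \cost(A) < \cost(T)$. So $(\beta, S \setminus S', T \setminus A)$ is a weakly cohesive triple still attaining the maximal value $\beta$ but with a strictly cheaper candidate set than $T$ --- contradicting the fact that GCR breaks ties among maximum-$\beta$ triples in favour of smaller $\cost(T)$. Hence $|S'| \ge \cost(A)\cdot n$.

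I do not anticipate a real obstacle here: this is essentially a one-paragraph ``complementation'' argument. The only points that need care are (i) interpreting ``the group $S$ considered in some step of GCR'' as the triple actually picked by GCR at that step, so that the maximality of $\beta$ and the tie-break rule are available; (ii) checking that the complementary triple lives in the same reduced election, which it does since $S \setminus S' \subseteq S$ and $T \setminus A \subseteq T$; and (iii) invoking $\cost(A) > 0$ to get the \emph{strict} inequality $\cost(T \setminus A) < \cost(T)$ that makes the tie-break actually bite, since a non-strict inequality would not yield a contradiction.
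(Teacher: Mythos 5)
Your proposal is correct and follows essentially the same route as the paper: pass to the complementary pair $(S \setminus S', T \setminus A)$, verify it is still weakly $(\beta, T\setminus A)$-cohesive in the current reduced election, and derive a contradiction with GCR's tie-breaking in favour of smaller $\cost(T)$ (using $\cost(A) > 0$ for strictness). Your write-up is, if anything, slightly more careful than the paper's, which leaves the utility condition $u_i(T\setminus A) \ge \beta$ implicit.
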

\begin{proof}
	The statement is trivial for $\cost(A) = 0$, so assume that $\cost(A) > 0$. Assume for a contradiction that the set $S'\subseteq S$ defined above has size $|S'| < \cost(A)\cdot \frac{n}{b}$. Then the group $S \setminus S'$ together with the set $T \setminus A$ is $(\beta, T \setminus A)$-cohesive because
	\[
		\textstyle
		|S \setminus S'| > |S| - \cost(A)\cdot \frac{n}{b} \geq \cost(T)\cdot \frac{n}{b} - \cost(A)\cdot \frac{n}{b} = \cost(T \setminus A)\cdot \frac{n}{b} \text{.}
	\]
	Further, as $\cost(A) > 0$, we have $\cost(T \setminus A) < \cost(T)$. Thus, GCR would select $S\setminus S'$ instead of $S$, a contradiction.
\end{proof}

\begin{lemma}\label{lemma-gcr-priceable}
	For every outcome of the GCR rule, there always exists a payment function satisfying conditions (C1)--(C4) with $b'=b$.
\end{lemma}
\begin{proof}
	Consider a single step of GCR and let $S$ be a $(\beta, T)$-cohesive group considered in that step. We will prove that there exists a price system in which voters from $S$ pay $\cost(c)$ dollars for each candidate $c\in T$. By combining these price systems for all the (pairwise disjoint) groups $S$ selected by GCR, we obtain a price system for the  outcome of GCR.

	Without loss of generality, we assume that $\frac{b}{n}$ is an integer and that $\cost(c)$ is an integer for all $c \in C$ (we can just appropriately scale up all costs and the budget, because we assumed that the budget and the costs are rational numbers). 
	
	We now imagine that each candidate $c \in T$ is split into $\cost(c)$ many \emph{pieces}, each with cost $1$. Let $A_T$ be the set of all pieces. We also imagine that each voter $i \in S$ has $\frac{b}{n}$ many \emph{coins}, each worth $1$, and let $A_S$ be the set of all coins. Note that one coin can pay for one piece.

	Consider the bipartite graph $G = (A_S + A_T, E)$, where 
there is an edge between a coin belonging to voter $i\in S$ and a piece of a candidate $c\in T$ if and only if $u_i(c) > 0$. 
	
	Now, consider any subset of pieces $A\subseteq A_T$, and let us assess the size of the neighbourhood $N_G(A) \subseteq A_S$. Let $C(A)$ denote the set of candidates who have at least one piece in $A$. Then $N_G(A)$ consists of all the coins of those voters who assign a positive utility to some candidate from $C(A)$. By \Cref{fact-1} there are at least $\cost(C(A)) \cdot \frac{n}{b}$ such voters, each of whom have $\frac{b}{n}$ many coins. Thus
\[
	\textstyle
	|N_G(A)| \ge \cost(C(A)) \cdot \frac{n}{b} \cdot \frac{b}{n} = \cost(C(A))  \ge |A|.
	\]
Hence, by Hall's theorem, there is a one-to-one mapping between coins and pieces. This allows us to construct payment functions as follows:
	For every voter $i\in S$ and candidate $c \in T$, if exactly $q$ coins of $i$ are mapped to some parts of $c$, then $p_i(c) = q$. It is straightforward to check that such a payment function satisfies conditions (C1)--(C4) for $b'=b$, which completes the proof.
\end{proof}

Finally, we can state the main result of this subsection.

\begin{theorem}\label{thm:gcr_priceable}
	Every outcome $W$ elected by GCR can be completed to some priceable outcome.
\end{theorem}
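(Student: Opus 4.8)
The plan is to keep the budget fixed at $b = 1$, start from the payment function supplied by \Cref{lemma-gcr-priceable} (which already satisfies (C1)--(C4) for the GCR outcome $W$ and, by inspection of its construction, uses nonnegative payments, so $p_i(W) \le \tfrac1n$ for every voter $i$), and then greedily enlarge both the outcome and the payments until condition (C5) holds. Concretely, I would iterate the following step: as long as there is a candidate $c \notin W$ violating (C5), i.e. with $\sum_{i\,:\,u_i(c) > 0}\bigl(\tfrac1n - p_i(W)\bigr) > \cost(c)$, add $c$ to $W$ and let the supporters of $c$ (and only them) pay for $c$, choosing nonnegative amounts $p_i(c)$ summing to $\cost(c)$ with $p_i(c) \le \tfrac1n - p_i(W)$ for each supporter. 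The inequality in the violation is strict and each term $\tfrac1n - p_i(W)$ is nonnegative, so these upper bounds sum to strictly more than $\cost(c)$ and such a choice always exists (for instance by raising all contributions as equally as possible, i.e. a water-filling argument). An equivalent way to phrase this completion is to run a continuation of Rule X from the state left by GCR, giving each voter the leftover budget $\tfrac1n - p_i(W)$: Rule X keeps buying candidates as long as their supporters can still afford them, and it is priceable.

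I would then verify that every step preserves (C1)--(C4): only supporters of a newly added $c$ are charged for $c$, giving (C1); payments never exceed $\tfrac1n$ per voter by construction, giving (C2); every elected candidate, old or newly added, is fully paid, giving (C3); and voters are only ever charged for elected candidates, giving (C4). The key monotonicity observation is that raising payments can only \emph{decrease} the left-hand side of (C5) for every other unelected candidate, so a candidate that already satisfies (C5) continues to satisfy it and no new violations are ever created. Since $W$ only grows and stays inside the finite candidate set $C$, at most $m$ candidates are added and the process terminates. At termination no unelected candidate violates (C5), so the resulting set $W' \supseteq W$ together with the final payment function and $b = 1$ satisfies (C1)--(C5); moreover $\cost(W') = \sum_{c \in W'}\cost(c) = \sum_{c \in W'}\sum_{i \in N} p_i(c) = \sum_{i \in N}\sum_{c \in W'} p_i(c) \le \sum_{i \in N}\tfrac1n = 1$, so $W'$ is a feasible outcome. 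Hence $W'$ is priceable and contains $W$, as claimed.

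Most of the real work sits in the lemmas, in particular the polyandrous Hall argument (\Cref{lemma-hall,fact-1}) behind \Cref{lemma-gcr-priceable}, which is what guarantees the initial payments obeying (C1)--(C4); given that, the theorem itself is short. The only delicate points are the feasibility of each completion step (which rests on the strictness of the (C5) violation together with $p_i(W) \le \tfrac1n$) and termination (immediate from $W$ growing monotonically inside the finite set $C$). It is worth flagging that the completed outcome need not be exhaustive, and indeed by \Cref{ex:priceability_vs_exhaustiveness} no outcome can be simultaneously priceable and exhaustive, so a separate, Rule-X-based (and generally non-priceable) completion is needed when exhaustiveness is the goal.
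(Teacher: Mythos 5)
Your proof is correct and takes essentially the same route as the paper: both start from the (C1)--(C4) payments supplied by \Cref{lemma-gcr-priceable} and then complete $W$ by letting supporters spend their leftover budgets $\nicefrac{1}{n} - \sum_{c \in W} p_i(c)$ until no unelected candidate violates (C5), which the paper phrases exactly as running Rule~X from that state (your ``equivalent way to phrase this completion''), with your greedy water-filling step just making the verification explicit. Only a side remark overreaches: \Cref{ex:priceability_vs_exhaustiveness} shows priceability and exhaustiveness conflict on that particular instance, not that no outcome can ever be both.
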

\begin{proof}
	From \Cref{lemma-gcr-priceable}, we know that there exists a family of payment functions $\{p\}_{i \in N}$ satisfying conditions (C1)--(C4) for $W$. Now, to obtain outcome $W'\supseteq W$ supported by a valid price system, it is enough to run Equal Shares for this instance with the initial outcome set to $W$ and the initial endowment of each voter $i\in N$ set to $\nicefrac{b}{n} - \sum_{c \in W} p_i(c)$.
\end{proof}

\subsection{Some Drawbacks of the Greedy Cohesive Rule}
\label{sec:gcr-drawbacks}
Since GCR satisfies FJR but Equal Shares does not, we may conclude that GCR is a better rule. Clearly, GCR is custom-engineered to satisfy FJR. Thus, we may expect the rule to be deficient in other dimensions. The results presented in \Cref{sec:gcr-extendable} certainly suggest that GCR is not pathological, but in this section we consider some examples where Equal Shares seems to select better outcomes than GCR.

We begin by discussing a property that \citet{pet-sko:laminar} call \myemph{laminar proportionality}. This property identifies a family of well-behaved preference profiles and specifies the outcome on those profiles. The axiom is defined for the case of approval-based committee elections. Equal Shares satisfies it; the following example shows that GCR does not.

\begin{example}[GCR fails laminar proportionality]
	\label{ex:gcr_not_laminar}
	Let $N=\{1,2,3,4\}$ and $b=8$, and introduce the candidate sets $X = \{c_1, \ldots, c_4\}$, $Y=\{c_5, \ldots, c_{10}\}$, and $Z=\{c_{11}, c_{12}\}$. The first three voters approve $X \cup Y$, and the fourth one approves $X \cup Z$. Two copies of the profile are depicted below. The candidates are represented by boxes; each candidate is approved by the voters who are below the corresponding box.

	\begin{center}
		
		\begin{minipage}{0.35\linewidth}
			\begin{center}
				\begin{tikzpicture}
					\node at (0.5, -0.25) {$v_1$};
					\node at (1.5, -0.25) {$v_2$};
					\node at (2.5, -0.25) {$v_3$};
					\node at (3.5, -0.25) {$v_4$};
					
					\filldraw[fill=green!10!white, draw=black] (0,0.0) rectangle ++(4.0,0.5);
					\node at (2.0, 0.25) {$c_{1}$};
					\filldraw[fill=green!10!white, draw=black] (0,0.5) rectangle ++(4.0,0.5);
					\node at (2.0, 0.75) {$c_{2}$};
					\filldraw[fill=green!10!white, draw=black] (0,1.0) rectangle ++(4.0,0.5);
					\node at (2.0, 1.25) {$c_{3}$};
					\filldraw[fill=green!10!white, draw=black] (0,1.5) rectangle ++(4.0,0.5);
					\node at (2.0, 1.75) {$c_{4}$};
					
					\filldraw[fill=green!10!white, draw=black] (0,2.0) rectangle ++(3.0,0.5);
					\node at (1.5, 2.25) {$c_{5}$};
					\filldraw[fill=green!10!white, draw=black] (0,2.5) rectangle ++(3.0,0.5);
					\node at (1.5, 2.75) {$c_{6}$};
					\filldraw[fill=green!10!white, draw=black] (0,3.0) rectangle ++(3.0,0.5);
					\node at (1.5, 3.25) {$c_{7}$};
					\filldraw[fill=white, draw=black] (0,3.5) rectangle ++(3.0,0.5);
					\node at (1.5, 3.75) {$c_{8}$};
					\filldraw[fill=white, draw=black] (0,4.0) rectangle ++(3.0,0.5);
					\node at (1.5, 4.25) {$c_{9}$};
					\filldraw[fill=white, draw=black] (0,4.5) rectangle ++(3.0,0.5);
					\node at (1.5, 4.75) {$c_{10}$};
					
					\filldraw[fill=green!10!white, draw=black] (3.0,2.0) rectangle ++(1.0,0.5);
					\node at (3.5, 2.25) {$c_{11}$};
					\filldraw[fill=white, draw=black] (3.0,2.5) rectangle ++(1.0,0.5);
					\node at (3.5, 2.75) {$c_{12}$};
					
				\end{tikzpicture}
			\end{center}
		\end{minipage}
		\qquad
		\begin{minipage}{0.35\linewidth}
			\begin{center}
				\begin{tikzpicture}
					\node at (0.5, -0.25) {$v_1$};
					\node at (1.5, -0.25) {$v_2$};
					\node at (2.5, -0.25) {$v_3$};
					\node at (3.5, -0.25) {$v_4$};
					
					\filldraw[fill=white, draw=black] (0,0.0) rectangle ++(4.0,0.5);
					\node at (2.0, 0.25) {$c_{1}$};
					\filldraw[fill=white, draw=black] (0,0.5) rectangle ++(4.0,0.5);
					\node at (2.0, 0.75) {$c_{2}$};
					\filldraw[fill=white, draw=black] (0,1.0) rectangle ++(4.0,0.5);
					\node at (2.0, 1.25) {$c_{3}$};
					\filldraw[fill=white, draw=black] (0,1.5) rectangle ++(4.0,0.5);
					\node at (2.0, 1.75) {$c_{4}$};
					
					\filldraw[fill=blue!10!white, draw=black] (0,2.0) rectangle ++(3.0,0.5);
					\node at (1.5, 2.25) {$c_{5}$};
					\filldraw[fill=blue!10!white, draw=black] (0,2.5) rectangle ++(3.0,0.5);
					\node at (1.5, 2.75) {$c_{6}$};
					\filldraw[fill=blue!10!white, draw=black] (0,3.0) rectangle ++(3.0,0.5);
					\node at (1.5, 3.25) {$c_{7}$};
					\filldraw[fill=blue!10!white, draw=black] (0,3.5) rectangle ++(3.0,0.5);
					\node at (1.5, 3.75) {$c_{8}$};
					\filldraw[fill=blue!10!white, draw=black] (0,4.0) rectangle ++(3.0,0.5);
					\node at (1.5, 4.25) {$c_{9}$};
					\filldraw[fill=blue!10!white, draw=black] (0,4.5) rectangle ++(3.0,0.5);
					\node at (1.5, 4.75) {$c_{10}$};
					
					\filldraw[fill=blue!10!white, draw=black] (3.0,2.0) rectangle ++(1.0,0.5);
					\node at (3.5, 2.25) {$c_{11}$};
					\filldraw[fill=blue!10!white, draw=black] (3.0,2.5) rectangle ++(1.0,0.5);
					\node at (3.5, 2.75) {$c_{12}$};
				\end{tikzpicture}
			\end{center}
		\end{minipage}
		
	\end{center}
	
	In this election instance, laminar proportionality would require that the voting rule selects all the candidates from $X$ since they are approved by everyone. After electing the candidates in $X$, four seats are left to fill. Since the group $\{v_1,v_2,v_3\}$ the three times as large as the group $\{v_4\}$, laminar proportionality requires that we elect three candidates from $Y$ and one candidate from $Z$. Thus, the committee indicated by the green boxes on the left-hand figure is laminar proportional.
	
	On the other hand, in the first step GCR can choose the weakly $(6, Y)$-cohesive group $\{v_1,v_2,v_3\}$ and in the second step it can select the weakly $(2,Z)$-cohesive group $\{v_4\}$. This results in the blue committee depicted in the right-hand figure; this committee fails laminar proportionality. \qed
\end{example}

\Cref{ex:gcr_not_laminar} shows that in general, GCR is not laminar proportional, as it can return committees which are prohibited by the axiom. However, this example is not fully satisfactory, as it depends on tie-breaking. For example, in the first step we could choose the weakly $(6, \{c_2, c_3, c_4, c_5, c_6\})$-cohesive group containing the first three voters, and in the second step the weakly $(2,\{c_1, c_{11}\})$-cohesive group containing the last voter. An open question is whether GCR can always elect a committee satisfying laminar proportionality (among others). However, the following example shows that for some 'nearly laminar' instances, GCR does not match the general intuition standing behind this axiom.

\begin{example}\label{ex:gcr_not_nearly_laminar}
	Modify the instance described in \Cref{ex:gcr_not_laminar} in the following way: we have $N=[4000]$. Voter $1$ approves only candidates from $Y$, voters $2$ to $3000$ approve $X \cup Y$, voters $3001$ to $3999$ approve $X \cup Z$ and voter $4000$ approves $Z$. 
	\begin{center}
		\begin{tikzpicture}[xscale=1.7]
			\node[font=\footnotesize] at (0.125, -0.5) {$\{v_1\}$};
			\node[font=\footnotesize] at (3.125, -0.5) {$\{v_2, \dots, v_{3000}\}$};
			\node[font=\footnotesize] at (6.875, -0.5) {$\{v_{3001},\dots,v_{3999}\}$};
			\node[font=\footnotesize] at (7.875, -0.5) {$\{v_{4000}\}$};
			
			\draw[decorate,decoration={brace,mirror,transform={yscale=1.5}}] (0, -0.1) -- (0.24,-0.1);	
			\draw[decorate,decoration={brace,mirror,transform={yscale=1.5}}] (0.26, -0.1) -- (5.99,-0.1);			
			\draw[decorate,decoration={brace,mirror,transform={yscale=1.5}}] (6.01, -0.1) -- (7.74,-0.1);	
			\draw[decorate,decoration={brace,mirror,transform={yscale=1.5}}] (7.76, -0.1) -- (8,-0.1);	
			
			\filldraw[fill=white, draw=black] (0.25,0.0) rectangle ++(7.5,0.5);
			\node at (4.0, 0.25) {$c_{1}$};
			\filldraw[fill=white, draw=black] (0.25,0.5) rectangle ++(7.5,0.5);
			\node at (4.0, 0.75) {$c_{2}$};
			\filldraw[fill=white, draw=black] (0.25,1.0) rectangle ++(7.5,0.5);
			\node at (4.0, 1.25) {$c_{3}$};
			\filldraw[fill=white, draw=black] (0.25,1.5) rectangle ++(7.5,0.5);
			\node at (4.0, 1.75) {$c_{4}$};
			
			\filldraw[fill=blue!10!white, draw=black] (0,2.0) rectangle ++(6.0,0.5);
			\node at (3, 2.25) {$c_{5}$};
			\filldraw[fill=blue!10!white, draw=black] (0,2.5) rectangle ++(6.0,0.5);
			\node at (3, 2.75) {$c_{6}$};
			\filldraw[fill=blue!10!white, draw=black] (0,3.0) rectangle ++(6.0,0.5);
			\node at (3, 3.25) {$c_{7}$};
			\filldraw[fill=blue!10!white, draw=black] (0,3.5) rectangle ++(6.0,0.5);
			\node at (3, 3.75) {$c_{8}$};
			\filldraw[fill=blue!10!white, draw=black] (0,4.0) rectangle ++(6.0,0.5);
			\node at (3, 4.25) {$c_{9}$};
			\filldraw[fill=blue!10!white, draw=black] (0,4.5) rectangle ++(6.0,0.5);
			\node at (3, 4.75) {$c_{10}$};
			
			\filldraw[fill=blue!10!white, draw=black] (6.0,2.0) rectangle ++(2.0,0.5);
			\node at (7, 2.25) {$c_{11}$};
			\filldraw[fill=blue!10!white, draw=black] (6.0,2.5) rectangle ++(2.0,0.5);
			\node at (7, 2.75) {$c_{12}$};
		\end{tikzpicture}
	\end{center}
	This instance is not laminar (because of the two voters not approving $X$), but it is close to being laminar and it is reasonable to expect that the elected committee should be the same as the green one from \Cref{ex:gcr_not_laminar}. Equal Shares uniquely elects that committee. On the other hand, GCR selects first the weakly $(6, Y)$-cohesive group containing the first $3000$ voters and in the second step the weakly $(2,Z)$-cohesive group containing the last $1000$ voters. After that the algorithm stops, electing committee $Y\cup Z$, as depicted above. Note that in this case, the choice of weakly cohesive groups is unique. \qed
\end{example}

\Cref{ex:gcr_not_laminar,ex:gcr_not_nearly_laminar} do not rule out the existence of an FJR rule that is also laminar proportional; the existence of a natural example of such a rule is an interesting open problem.

\section{Equal Shares and GCR for Ordinal Ballots}
\label{sec:ordinal_model}

In this section we discuss how our two rules can be adapted for committee elections where voters have \myemph{ordinal preferences}, that is, voters express their preferences by ranking the candidates. 
The main idea is straightforward: we convert voters' preference rankings into additive valuations, by using positional scoring rules, and then apply our rules to the resulting election. Note that if we use scoring rules that assign positive values to positions in voters' rankings, we always obtain exhaustive rules. We will show that when we use a lexicographic conversion scheme in which voters care infinitely more about their top-ranked candidate than their second-ranked candidate and so on, then Equal Shares satisfies an axiom called Proportionality for Solid Coalitions (PSC) which was first introduced to analyze the Single Transferable Vote \citep{woo:j:properties,tid-ric:j:stv}. (GCR does not satisfy PSC.) The Method of Equal Shares as applied to ordinal preferences is an example of an Expanding Approvals rule, a class of rules identified by \citet{az-bar:expanding_approval}.  Interestingly, due to its flexibility, Equal Shares can be used to extend the proportionality idea behind PSC beyond a lexicographic interpretation of preferences: Depending on how we convert voters' preference rankings to utilities, we obtain different forms of proportionality (cf., \citealp{fal-sko-szu-tal:stv_and_pav}). For example, roughly speaking, if we use Borda scores, the rule chooses outcomes where (due to EJR1) for all groups $S$ of voters that rank a set $T$ of candidates highly, there is a voter in $S$ for whom enough highly-ranked candidates are included in the committee.

\subsection{Model for Ordinal Preferences}

In this section we assume that each voter $i \in N$ submits a strict preference order $\succ_i$ over the set of candidates. The order $c_{i_1} \succ_i c_{i_2} \succ \ldots \succ c_{i_m}$ means that $c_{i_1}$ is voter $i$'s most preferred candidate, $c_{i_2}$ is her second most preferred candidate, and so on. By $\pos_i(c)$ we denote the position of candidate $c$ in $i$'s preference ranking. In the above example we have $\pos_i(c_{i_1}) = 1$, $\pos_i(c_{i_2}) = 2$, and so on. For sets $A$ and $B$, we write $A \succ_i B$ if $a \succ_i b$ for all $a \in A,b \in B$.

Further, we assume unit costs, so that the goal is to select a committee of $b$ candidates.

\begin{definition}[Proportionality for Solid Coalitions (PSC)]\label{def:psc}
	An outcome $W$ satisfies \myemph{PSC} if for each $\ell \in [b]$, each subset of voters $S \subseteq N$ with $|S| \geq \ell \cdot \frac{n}{b}$, and each subset of candidates $T$ such that $T \succ_i C \setminus T$ for all $i \in S$, it holds that $|W \cap T| \geq \min(\ell, |T|)$.
	
	A rule satisfies PSC if for each election it only returns outcomes that satisfy PSC.
\end{definition}

\subsection{Equal Shares and PSC}

The definition of PSC focuses on voters' top preferences: it gives guarantees based on voters' top-1 preferences, also their top-2 preferences, and so on. Thus, intuitively, it is based on an interpretation where if $c \succ_i c'$, then the utility that voter $i$ assigns to candidate $c$ is infinitely higher than that assigned to $c'$. Equal Shares naturally extends to such preferences, which we call \emph{lexicographic utilities}, and we will see the rule then satisfies PSC. To implement lexicographic utilities, we need to adapt \Cref{def:rulex} to use a slightly different interpretation of the price per unit of utility, $\rho$. Normally, we assume that $\rho$ is a positive real number; in order to adapt the definition to lexicographic preferences we let $\rho$ be a pair, so that $\rho = (\rho_{\text{pos}}, \rho_{\text{pay}}) \in [m] \times \mathbb R_{\ge 0}$. Values of $\rho$ are lexicographically ordered, so $\rho < \rho'$ if and only if either $\rho_{\text{pos}} < \rho'_{\text{pos}}$ or $\rho_{\text{pos}} = \rho'_{\text{pos}}$ and $\rho_{\text{pay}} < \rho'_{\text{pay}}$. To define Equal Shares, we do not need to specify values for the utilities $u_i(c)$, but we only need to define how to multiply $\rho$ and $u_i(c)$, which we do as follows:
\begin{align*}
	\rho \cdot u_i(c) =
	\begin{cases} 
		\infty & \text{if } \pos_i(c) < \rho_{\text{pos}}, \\
		\rho_{\text{pay}} & \text{if } \pos_i(c) = \rho_{\text{pos}}, \\
		0 & \text{if } \pos_i(c) > \rho_{\text{pos}}. \\
	\end{cases} 
\end{align*}
With these definitions in place, we can evaluate the Method of Equal Shares for lexicographic utilities, induced by a profile of ordinal preferences.

Intuitively, when considering how to fund a candidate $c$, this rule looks for the highest position $r = \rho_{\text{pos}} \in [m]$ such that it is possible to cover $\cost(c)$ by taking all the remaining money of all voters who rank $c$ in positions $1, \dots, r-1$, and taking from voters who rank $c$ exactly in position $r$ either $\rho_{\text{pay}}$ or all their remaining money. Voters who rank $c$ in a lower position than $r$ do not pay.

Defined this way, the Method of Equal Shares is an example in the class of \emph{Expanding Approvals} rules \citep{az-bar:expanding_approval,aziz2021proportionally}. All such rules satisfy PSC \citep{aziz2021proportionally}, and thus Equal Shares satisfies PSC. We include a simple proof for completeness.

\begin{proposition}\label{prop:rulex_and_psc}
	The Method of Equal Shares for lexicographic utilities satisfies PSC.
\end{proposition}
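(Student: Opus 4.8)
The plan is to argue by contradiction along the lines of the proof of \Cref{thm:rule_x_and_ejr_cardinal}, but working directly with the voters' actual budgets instead of the hypothetical-unrestricted-budget device. Fix $\ell\in[k]$, a set $S$ with $|S|\ge n\ell/k$, and a set $T$ with $T\succ_i C\setminus T$ for all $i\in S$; we may assume $T\neq\emptyset$ (otherwise the claim is vacuous). Suppose for contradiction that Rule X returns $W$ with $|W\cap T|<\ell':=\min(\ell,|T|)$. Two elementary structural facts drive everything. First, because $S$ is solid for $T$, each $i\in S$ has $T$ as the set of its top $|T|$ candidates, so $\pos_i(c)\le|T|$ for $c\in T$ and $\pos_i(c')>|T|$ for $c'\notin T$. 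Second, along the run of Rule X the parameter $\rho$ is non-decreasing (as noted after \Cref{def:rulex}), and with the lexicographic multiplication rule $\rho\cdot u_i(c)\in\{0,1\}$ while $1/n-p_i(W)\le 1/n\le 1$, so an ``active'' voter (one with $\rho\ge\pos_i(c)$) is asked to pay their entire remaining budget towards $c$.

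First I would isolate the initial phase of the execution consisting of all selections made at levels $\rho\le|T|$; by monotonicity of $\rho$ this is a well-defined prefix of the run, ending in some intermediate outcome $W_1\subseteq W$. By the first structural fact, while $\rho\le|T|$ every $i\in S$ has $\rho\cdot u_i(c')=0$ for $c'\notin T$, hence $p_i(c')=0$; so during this phase the voters of $S$ pay only for candidates in $T$. Writing $t_1=|W_1\cap T|$ and using that each selected candidate is paid in full ($\sum_{i\in N}p_i(c)=\cost(c)=1/k$), the total spent by $S$ up to $W_1$ is at most $t_1/k$, so the remaining budget of $S$ at $W_1$ is at least $|S|/n-t_1/k\ge(\ell-t_1)/k$.

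Then I would split into two cases. If $T\subseteq W_1$, then $|W\cap T|\ge|W_1\cap T|=|T|\ge\ell'$, contradicting the assumption. Otherwise pick $c^*\in T\setminus W_1$; since the phase ended without selecting $c^*$, the candidate $c^*$ is not $\rho$-affordable for any $\rho\le|T|$, so in particular at $\rho=|T|$ the total willingness $\sum_{i\in N}\min(1/n-p_i(W_1),\rho\cdot u_i(c^*))$ lies below $\cost(c^*)=1/k$. But each $i\in S$ has $\pos_i(c^*)\le|T|$, so at $\rho=|T|$ its contribution to this sum equals its full remaining budget $1/n-p_i(W_1)$; hence the remaining budget of $S$ at $W_1$ is strictly less than $1/k$. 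Combining with the bound from the previous paragraph gives $(\ell-t_1)/k<1/k$, i.e. $t_1\ge\ell$ (as $\ell-t_1$ is an integer), so $|W\cap T|\ge t_1\ge\ell\ge\ell'$, again a contradiction. Either way $|W\cap T|\ge\min(\ell,|T|)$, which is exactly PSC (\Cref{def:psc}).

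The step I expect to need the most care is pinning down the meaning of ``$\rho$-affordable'' in the lexicographic model. With $\rho\in[m]$ the quantity $\sum_i\min(1/n-p_i(W),\rho\cdot u_i(c))$ is a step function of $\rho$ that jumps at integer values and may leap over $\cost(c)$, so \Cref{def:rulex} has to be read with the convention (standard for Expanding-Approvals-type rules) that the voters activating at the critical integer $\rho$ may contribute partially so that the total is exactly $\cost(c)$; otherwise the rule would be essentially trivial. I would state this convention explicitly up front and then verify that the proof only uses affordability through two robust consequences — that a candidate not selected during the $\rho\le|T|$ phase has total willingness below $\cost$ at $\rho=|T|$, and that the phase boundary $W_1$ is well-defined — both of which hold under this reading. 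The remaining ingredients (confinement of $S$'s payments to $T$ at low $\rho$, and the budget arithmetic) are routine.
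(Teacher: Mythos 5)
Your proof is correct and takes essentially the same route as the paper's: while $\rho \le |T|$ the voters of $S$ can only pay for candidates in $T$, and at $\rho = |T|$ their remaining budget (at least $(\ell - t_1)/k$) would make any unselected member of $T$ affordable, which forces $|W \cap T| \ge \min(\ell, |T|)$ exactly as in the paper's direct budget-counting argument. Your explicit convention for reading $\rho$-affordability in the lexicographic model (select at the first integer $\rho$ where the willingness sum reaches $\cost(c)$, with partial contributions from newly activated voters) is a sensible clarification of a point the paper leaves implicit, and the paper's own proof tacitly relies on the same reading.
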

\begin{proof}
	Consider a committee $W$ returned by Equal Shares for an election instance $(N, C, b, \{\succ_i\}_{i \in N})$. Let $\ell \in [b]$, $S \subseteq N$, and $T$ be as in the definition of PSC (\Cref{def:psc}). We need to show that $|W \cap T| \ge \min(\ell, |T|)$. The voters in $S$ initially have the following budget:
	\begin{align*}
		\textstyle
		|S| \cdot \frac{b}{n} \geq \ell \cdot \frac{n}{b} \cdot \frac{b}{n} = \ell \text{.}
	\end{align*}
	
	Consider the steps of Equal Shares as the value of $\rho_{\text{pos}}$ increases from $1$ to $|T|$. In each such step, each voter from $S$ can pay only for the candidates in $T$. Indeed, each candidate $c \in C \setminus T$ occupies a worse position than $|T|$ in those voters' preference rankings, and so for each $i \in S$ we have $\rho \cdot u_i(c) = 0$ (since $\pos_i(c) \ge |T| > \rho_{\text{pos}}$). When $\rho_{\text{pos}}$ reaches $|T|$, then for each candidate $c \in T$ and each $i \in S$ we have $\rho \cdot u_i(c) \ge \rho_{\text{pay}} > 0$. If at this point, we have $|W \cap T| < \min(\ell, |T|)$, then there is some $c \in T$ that has not yet been added to $W$, so $c \not\in W$. Also the voters in $S$ have paid money to add at most $\ell - 1$ candidates to $W$. Thus, they together have at least a budget of $\ell - (\ell -1) = 1$ left. Hence Equal Shares has not yet terminated, because it could add $c$ to the outcome. Because Equal Shares will terminate, at the end it must be true that $|W \cap T| \ge \min(\ell, |T|)$, as desired.
\end{proof}

One may wonder, given the lexicographic utility scheme, whether PSC is just a consequence of Equal Shares satisfying EJR. \Cref{ex:ejr_and_psc} below shows that this is not the case and that the two axioms are logically incomparable in this context. FJR and PSC are also logically incomparable.

\begin{example}[PSC is logically incomparable to both EJR and FJR]\label{ex:ejr_and_psc}
	Consider three voters with the following preference orders over $C = \{c_1, c_2, c_3, c_4\}$:
	\begin{align*}
		1\colon & c_1 \succ c_2 \succ c_3 \succ c_4 \\
		2\colon & c_2 \succ c_3 \succ c_1 \succ c_4 \\
		3\colon & c_3 \succ c_1 \succ c_2 \succ c_4 \text{.}
	\end{align*}
	Assume $b = 2$. In this example PSC would require that two candidates from $\{c_1, c_2, c_3\}$ are elected. On the other hand, committee $\{c_1, c_4\}$ satisfies FJR. Thus, EJR does not imply PSC. Because FJR is stronger than EJR, it follows that FJR does not imply PSC.
	
	Consider two voters with the following preferences:
	\begin{align*}
		1\colon & c_1 \succ c_2 \succ c_3 \succ c_4 \\
		2\colon & c_4 \succ c_1 \succ c_3 \succ c_2 \text{.}
	\end{align*}
	Assume $b=1$. Here, EJR requires that $c_1$, $c_2$, or $c_4$ must be selected. On the other hand, $\{c_3\}$ is a committee that satisfies PSC. Thus, PSC does not imply EJR. Because FJR is stronger than EJR, it follows that PSC does not imply FJR.
	\qed
\end{example}

\subsection{GCR and PSC}

GCR can also be adapted to lexicographic utilities. In this case, it is sufficient to assume that the utilities are exponentially decreasing with the positions: for each $i \in N$ and $c \in C$ we set $u_i(c) = m^{-\pos_i(c)}$.\footnote{This is sufficient because in GCR, utilities are only used to compare sets of candidates in order of preference. For Equal Shares, we needed to use ``infinitesimal'' values because utility numbers are also used to decide how much each voter pays.} Then, for each $c$ we have that $u_i(c) > \sum_{c' \prec_i c} u_i(c')$, and so the utility a voter assigns to a candidate in position $p$ is higher than the utility that it would assign to any committee all of whose members are ranked below $p$. 

\begin{example}[GCR for lexicographic utilities fails PSC]
	Consider the following preference profile:
	\begin{align*}
		1\colon & c_1 \succ c_7 \succ c_8 \succ c_6  \succ c_4 \succ c_5 \succ c_2 \succ c_3 \succ c_9 \succ c_{10} \succ c_{11} \succ c_{12} \\
		2\colon & c_1 \succ c_7 \succ c_8 \succ c_6  \succ c_4 \succ c_5 \succ c_2 \succ c_3 \succ c_9 \succ c_{10} \succ c_{11} \succ c_{12} \\
		3\colon & c_1 \succ c_2 \succ c_3 \succ c_6  \succ c_4 \succ c_5 \succ c_7 \succ c_8 \succ c_9 \succ c_{10} \succ c_{11} \succ c_{12} \\
		4\colon & c_1 \succ c_2 \succ c_3 \succ c_6  \succ c_4 \succ c_5 \succ c_7 \succ c_8 \succ c_9 \succ c_{10} \succ c_{11} \succ c_{12} \\
		5\colon & c_1 \succ c_2 \succ c_3 \succ c_6  \succ c_4 \succ c_5 \succ c_7 \succ c_8 \succ c_9 \succ c_{10} \succ c_{11} \succ c_{12} \\
		6\colon & c_1 \succ c_2 \succ c_3 \succ c_6  \succ c_4 \succ c_5 \succ c_7 \succ c_8 \succ c_9 \succ c_{10} \succ c_{11} \succ c_{12} \\
		7\colon & c_2 \succ c_3 \succ c_1 \succ c_7  \succ c_8 \succ c_4 \succ c_5 \succ c_6 \succ c_9 \succ c_{10} \succ c_{11} \succ c_{12} \\
		8\colon & c_3 \succ c_2 \succ c_1 \succ c_7  \succ c_8 \succ c_4 \succ c_5 \succ c_6 \succ c_9 \succ c_{10} \succ c_{11} \succ c_{12} \\
		9\colon & c_4 \succ c_5 \succ c_9 \succ c_7  \succ c_8 \succ c_1 \succ c_2 \succ c_3 \succ c_6 \succ c_{10} \succ c_{11} \succ c_{12} \\
		10\colon & c_5 \succ c_4 \succ c_9 \succ c_7  \succ c_8 \succ c_1 \succ c_2 \succ c_3 \succ c_6 \succ c_{10} \succ c_{11} \succ c_{12} \\
		11\colon & c_{10} \succ c_{11} \succ c_{12} \succ c_7  \succ c_8 \succ c_1 \succ c_2 \succ c_3 \succ c_4 \succ c_5 \succ c_6 \succ c_9 \\
		12\colon & c_{11} \succ c_{10} \succ c_{12} \succ c_7  \succ c_8 \succ c_1 \succ c_2 \succ c_3 \succ c_4 \succ c_5 \succ c_6 \succ c_9 \text{.}
	\end{align*}
	
	Assume $b = 4$. Here, GCR will first pick $S = \{1, \ldots, 6\}$ as a weakly cohesive group, with the corresponding set of candidates $T = \{c_1, c_6\}$. Indeed, if $T$ consisted of $3$ candidates, then $S$ would need to have at least 9 voters. However, any 9 voters rank at least 4 different candidates at the top position, thus at least one of them would have a lower satisfaction than the voters from $S$ have from $T$. By the same argument, $T$ cannot consist of 4 candidates. If $T$ consisted of 2 candidates but $S$ included one voter from $7, \ldots, 12$, then the satisfaction of voter 1 or 2 would also be lower. Indeed, these two voters rank $c_2$, $c_3$, $c_4$, $c_5$, $c_{10}$, and $c_{11}$ (that is candidates that appear in the top positions) below $c_6$.
	
	Hence, GCR picks $c_1$ and $c_6$, and labels the first 6 voters as inactive. In the second step, the rule picks $c_7$ and $c_8$. This is because each other candidate appears at most twice before $c_7$ and $c_8$ in the remaining voters' rankings. Thus, the rule picks $c_1, c_6, c_7$ and~$c_8$.
	
	On the other hand, by looking at voters $3, \ldots, 8$ we observe that PSC requires that two candidates from $c_1, c_2, c_3$ should be selected.
	\qed
\end{example}

\section{Conclusion}

In this paper, we have formulated two axioms, EJR and FJR, that capture the idea of proportionality in the participatory budgeting (PB) model. We have designed a simple and natural rule for the PB model, the Method of Equal Shares. It satisfies EJR and other proportionality-related properties, and it is computable in polynomial time. The stronger of our two properties, FJR, is also satisfiable, albeit by a different and arguably less natural voting rule. It is an interesting open question whether there exists a natural voting rule that satisfies FJR and shares other desirable properties of Equal Shares.  

Many cities run PB by dividing the overall budget between districts and running separate elections in each district. In particular, this is true in the Polish cities that provide our experimental election data. We claimed in the introduction that this practice of separate elections leads to inferior outcomes. We designed a final experiment to study this question. Our results show a visible advantage of using global rules such as Equal Shares over separate district elections. For example, Equal Shares always produces outcomes with a more equal distribution of voter utility, and in most cases also provides a higher total utility in comparison to the rules that are in actual use in the elections we examined. 

In the conference proceedings version of this paper \citep[\href{https://proceedings.neurips.cc/paper/2021/hash/69f8ea31de0c00502b2ae571fbab1f95-Abstract.html}{link to online version}]{peters2021neurips}, we also report the results of some experiments on data from the \href{http://pabulib.org/}{Pabulib.org} library \citep{stolicki2020pabulib}, which contains full voting data from large PB elections in several Polish cities including Warsaw and Krak\'ow. In those experiments, we compare the outputs of different rules (Equal Shares, Phragm\'en's rule, and a PAV-like rule), and we compare the behavior of different completions of Equal Shares.

\subsection*{Acknowledgments}
Grzegorz Pierczy\'nski and Piotr Skowron were supported by Poland's National Science Center grant UMO-2019/35/B/ST6/02215. Part of this work was done while Dominik Peters was at Carnegie Mellon University and Harvard University.

\bibliographystyle{plainnat}

\end{document}